\newtheorem{proposition}{Proposition}
\newtheorem{observation}{Observation}
\begin{document}
%
% paper title
% Titles are generally capitalized except for words such as a, an, and, as,
% at, but, by, for, in, nor, of, on, or, the, to and up, which are usually
% not capitalized unless they are the first or last word of the title.
% Linebreaks \\ can be used within to get better formatting as desired.
% Do not put math or special symbols in the title.
\title{UniqueRank: Identifying Important and Difficult-to-Replace Nodes in Attributed Graphs}
%
%
% author names and IEEE memberships
% note positions of commas and nonbreaking spaces ( ~ ) LaTeX will not break
% a structure at a ~ so this keeps an author's name from being broken across
% two lines.
% use \thanks{} to gain access to the first footnote area
% a separate \thanks must be used for each paragraph as LaTeX2e's \thanks
% was not built to handle multiple paragraphs
%
%
%\IEEEcompsocitemizethanks is a special \thanks that produces the bulleted
% lists the Computer Society journals use for "first footnote" author
% affiliations. Use \IEEEcompsocthanksitem which works much like \item
% for each affiliation group. When not in compsoc mode,
% \IEEEcompsocitemizethanks becomes like \thanks and
% \IEEEcompsocthanksitem becomes a line break with idention. This
% facilitates dual compilation, although admittedly the differences in the
% desired content of \author between the different types of papers makes a
% one-size-fits-all approach a daunting prospect. For instance, compsoc 
% journal papers have the author affiliations above the "Manuscript
% received ..."  text while in non-compsoc journals this is reversed. Sigh.

\author{Erica~Cai,
Benjamin~A.~Miller,
        Olga~Simek,
        and~Christopher~L.~Smith% <-this % stops a space
\IEEEcompsocitemizethanks{\IEEEcompsocthanksitem B. A. Miller, O. Simek, and C. L. Smith are with Lincoln Laboratory, Massachusetts Institute of Technology, Lexington, MA, 02421.\protect\\
% note need leading \protect in front of \\ to get a newline within \thanks as
% \\ is fragile and will error, could use \hfil\break instead.
E-mail: \{bamiller, osimek, clsmith\}@ll.mit.edu 
\IEEEcompsocthanksitem E. Cai is currently with Meta. All work was performed during her time at Lincoln Laboratory.\protect\\
E-mail: ecai@umass.edu}% <-this % stops an unwanted space
\thanks{This material is based upon work supported by the Combatant Commands under Air Force Contract No. FA8702-15-D-0001 or FA8702-25-D-B002. Any opinions, findings, conclusions or recommendations expressed in this material are those of the author(s) and do not necessarily reflect the views of the Combatant Commands.}}
%\thanks{Manuscript received April 19, 2005; revised August 26, 2015.}}

% note the % following the last \IEEEmembership and also \thanks - 
% these prevent an unwanted space from occurring between the last author name
% and the end of the author line. i.e., if you had this:
% 
% \author{....lastname \thanks{...} \thanks{...} }
%                     ^------------^------------^----Do not want these spaces!
%
% a space would be appended to the last name and could cause every name on that
% line to be shifted left slightly. This is one of those "LaTeX things". For
% instance, "\textbf{A} \textbf{B}" will typeset as "A B" not "AB". To get
% "AB" then you have to do: "\textbf{A}\textbf{B}"
% \thanks is no different in this regard, so shield the last } of each \thanks
% that ends a line with a % and do not let a space in before the next \thanks.
% Spaces after \IEEEmembership other than the last one are OK (and needed) as
% you are supposed to have spaces between the names. For what it is worth,
% this is a minor point as most people would not even notice if the said evil
% space somehow managed to creep in.

% The paper headers
\markboth{In Submission}%
{Shell \MakeLowercase{\textit{et al.}}: Bare Demo of IEEEtran.cls for Computer Society Journals}
% The only time the second header will appear is for the odd numbered pages
% after the title page when using the twoside option.
% 
% *** Note that you probably will NOT want to include the author's ***
% *** name in the headers of peer review papers.                   ***
% You can use \ifCLASSOPTIONpeerreview for conditional compilation here if
% you desire.

% The publisher's ID mark at the bottom of the page is less important with
% Computer Society journal papers as those publications place the marks
% outside of the main text columns and, therefore, unlike regular IEEE
% journals, the available text space is not reduced by their presence.
% If you want to put a publisher's ID mark on the page you can do it like
% this:
%\IEEEpubid{0000--0000/00\$00.00~\copyright~2015 IEEE}
% or like this to get the Computer Society new two part style.
%\IEEEpubid{\makebox[\columnwidth]{\hfill 0000--0000/00/\$00.00~\copyright~2015 IEEE}%
%\hspace{\columnsep}\makebox[\columnwidth]{Published by the IEEE Computer Society\hfill}}
% Remember, if you use this you must call \IEEEpubidadjcol in the second
% column for its text to clear the IEEEpubid mark (Computer Society jorunal
% papers don't need this extra clearance.)

% use for special paper notices
%\IEEEspecialpapernotice{(Invited Paper)}

% for Computer Society papers, we must declare the abstract and index terms
% PRIOR to the title within the \IEEEtitleabstractindextext IEEEtran
% command as these need to go into the title area created by \maketitle.
% As a general rule, do not put math, special symbols or citations
% in the abstract or keywords.
\IEEEtitleabstractindextext{%
\begin{abstract}
Node-ranking methods that focus on structural importance are widely used in a variety of applications, from ranking webpages in search engines to identifying key molecules in biomolecular networks. In real social, supply chain, and terrorist networks, one definition of importance considers the impact on information flow or network productivity when a given node is removed. In practice, however, a nearby node may be able to replace another node upon removal, allowing the network to continue functioning as before. This \emph{replaceability} is an aspect that existing ranking methods do not consider. To address this, we introduce UniqueRank, a Markov-Chain-based approach that captures attribute uniqueness in addition to structural importance, making top-ranked nodes harder to replace. We find that UniqueRank identifies important nodes with dissimilar attributes from its neighbors in simple symmetric networks with known ground truth. Further, on real terrorist, social, and supply chain networks, we demonstrate that removing and attempting to replace top UniqueRank nodes often yields larger efficiency reductions than removing and attempting to replace top nodes ranked by competing methods. Finally, we show UniqueRank's versatility by demonstrating its potential to identify structurally critical atoms with unique chemical environments in biomolecular structures.
\end{abstract}

% Note that keywords are not normally used for peerreview papers.
\begin{IEEEkeywords}
attributed node ranking, Markov Chain model, random walk, social network, graph efficiency, node importance, node uniqueness.
\end{IEEEkeywords}}

% make the title area
\maketitle

% To allow for easy dual compilation without having to reenter the
% abstract/keywords data, the \IEEEtitleabstractindextext text will
% not be used in maketitle, but will appear (i.e., to be "transported")
% here as \IEEEdisplaynontitleabstractindextext when the compsoc 
% or transmag modes are not selected <OR> if conference mode is selected 
% - because all conference papers position the abstract like regular
% papers do.
\IEEEdisplaynontitleabstractindextext
% \IEEEdisplaynontitleabstractindextext has no effect when using
% compsoc or transmag under a non-conference mode.

% For peer review papers, you can put extra information on the cover
% page as needed:
% \ifCLASSOPTIONpeerreview
% \begin{center} \bfseries EDICS Category: 3-BBND \end{center}
% \fi
%
% For peerreview papers, this IEEEtran command inserts a page break and
% creates the second title. It will be ignored for other modes.
\IEEEpeerreviewmaketitle

\IEEEraisesectionheading{\section{Introduction}\label{sec:introduction}}
% Computer Society journal (but not conference!) papers do something unusual
% with the very first section heading (almost always called "Introduction").
% They place it ABOVE the main text! IEEEtran.cls does not automatically do
% this for you, but you can achieve this effect with the provided
% \IEEEraisesectionheading{} command. Note the need to keep any \label that
% is to refer to the section immediately after \section in the above as
% \IEEEraisesectionheading puts \section within a raised box.

% The very first letter is a 2 line initial drop letter followed
% by the rest of the first word in caps (small caps for compsoc).
% 

% no \IEEEPARstart
\IEEEPARstart{I}{dentifying} important nodes within a graph is a critical task across a wide range of real-world applications. In terrorist networks, a few key individuals can orchestrate large-scale operations; in supply chains, certain warehouses are crucial for the efficient distribution of essential goods; and on social media, influential users can help to rapidly spread information. Numerous methods have been developed to rank the importance of nodes~\cite{ZAREIE2018hierarchical,WANG2017ranking,cheng2011virtual,SALAVATI2019ranking}, with PageRank~\cite{Page1998PageRank,brin1998anatomy} being one of the most widely used. PageRank uses a Markov Chain model based on the assumption that high-ranking nodes are those that are linked to by other high-ranking nodes. More recent approaches, such as AttriRank~\cite{hsu2017unsupervised} and similar methods~\cite{benyahia2015centrality}, extend these methods by integrating node attributes, such as organizational membership, into the ranking process, which captures richer contextual information.

A persistent challenge in real-world applications is that the removal of important nodes can greatly compromise a network’s ability to perform its intended function. However, many methods for identifying critical nodes overlook the fact that networks are often resilient and can adapt to disruptions by substituting the removed nodes with alternative nodes located nearby~\cite{missaoui2013social}. For example, in a terrorist organization, if a central leader is removed, another operative with similar capabilities may step into the leadership role, with network connections subsequently rerouting to preserve the operational structure. In such cases, while the initial removal may temporarily destabilize the system, the long-term impact on overall effectiveness may be minimal if a suitable replacement is readily available. Therefore, the extent to which node removal and subsequent replacement disrupts network efficiency depends not only on the structural importance of the node but also on the presence and accessibility of potential replacements within the network.

\begin{figure}
\centering
    \includegraphics[width=\columnwidth]{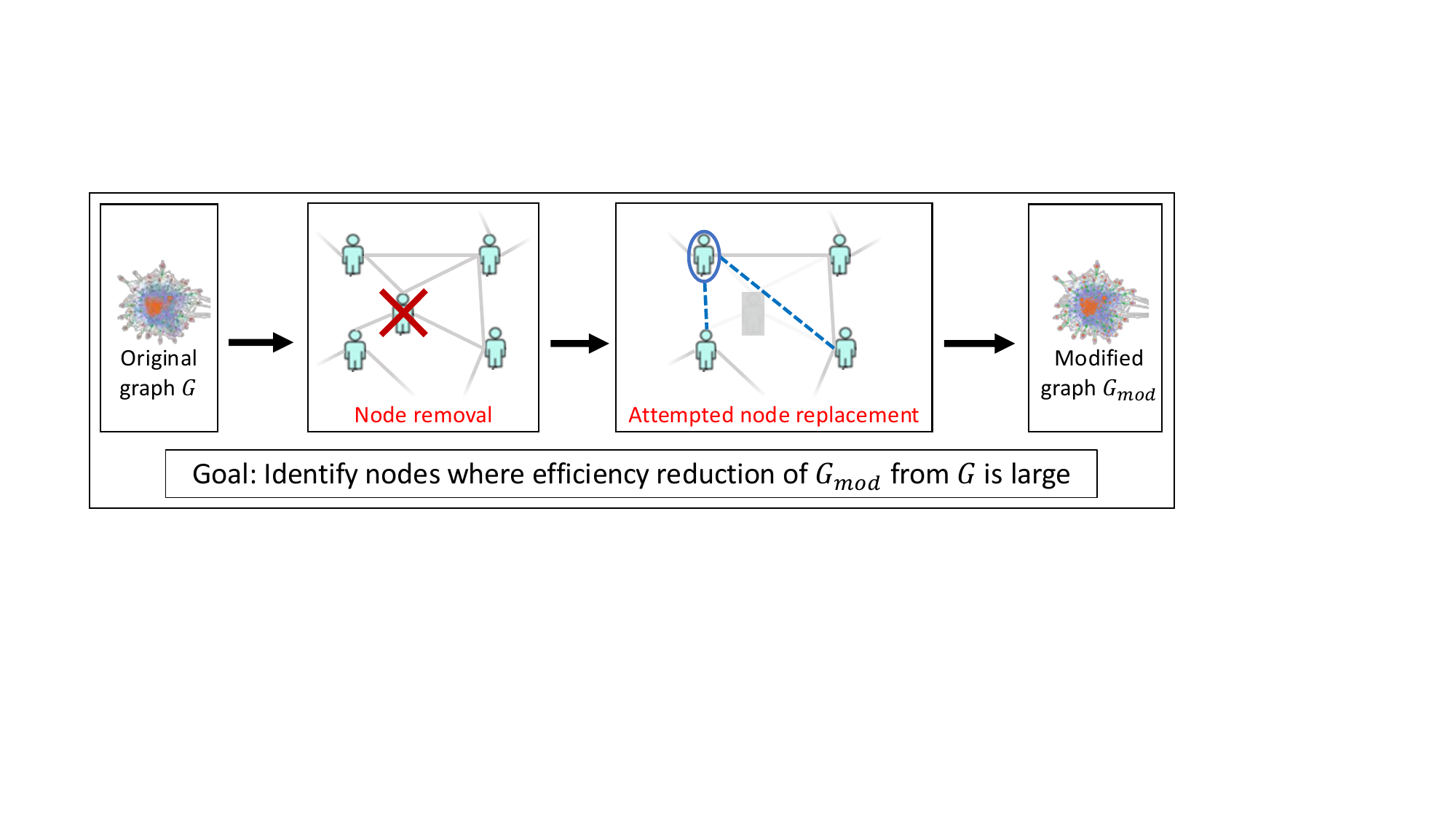}
    \caption{Pipeline overview for evaluating network efficiency reduction after node removal and attempted replacement. The process involves identifying a key node, removing it, searching for a suitable replacement, and then measuring the resulting change in efficiency.}
    \label{fig:goal}
\end{figure}

To address this gap, we focus on identifying nodes whose removal and subsequent attempted replacement result in a significant decrease in the network's overall efficiency\cite{latora2001efficient}, as in Figure \ref{fig:goal}. Specifically, we propose a novel node ranking task that prioritizes nodes that are both structurally important and have attributes that are distinct from those in their immediate neighborhood. By emphasizing attribute uniqueness in addition to structural importance, our approach aims to pinpoint nodes that are not only critical to the network's functioning but also difficult to replace effectively. Although numerous methods rank nodes by structural importance---such as traditional centrality metrics~\cite{Rodrigues2019network,Das2018-aj}, Markov Chain-based approaches~\cite{Page1998PageRank,brin1998anatomy}, and newer graph neural network models~\cite{QU2023gnr,maurya2021graph,li2024novel,park2019estimating,ergashev2023learning,he2022gnnrank}---few incorporate additional node attribute information. In most cases, attributes are derived directly from topology, while in cases with extraneous node attributes~\cite{benyahia2015centrality}, uniqueness is not generally considered. Furthermore, to our knowledge, no existing method specifically ranks nodes by both structural importance and attribute uniqueness. Naive approaches that consider importance and uniqueness as separate variables in a single function typically require customized parameters for each graph.

We introduce UniqueRank, a provably efficient method that builds on AttriRank and incorporates node attribute information into a Markov Chain model to rank nodes by structural importance and uniqueness of attributes. UniqueRank features a tunable hyperparameter that allows users to balance the emphasis between these two criteria. We demonstrate that, in symmetric networks with known ground truth, UniqueRank reliably assigns higher ranks to nodes that are both structurally central and possess distinctive attributes. Therefore, the top UniqueRank nodes form a robust set of candidates that are both critical and hard to replace. A secondary refinement step is then applied to select the final set of top-ranked nodes that maximize both structural importance and attribute uniqueness.

We rigorously evaluate efficiency reduction of top UniqueRank nodes across diverse domains---real terrorist networks, social networks, and supply chain networks---and find that removing and then attempting to replace these top-ranked nodes consistently leads to larger efficiency reduction in the neighborhood of the removed node when compared to other widely used and traditional ranking methods. We measure efficiency reduction~\cite{latora2001efficient} by quantifying the relative decrease in shortest path distances between node pairs after a top-ranked node is removed and replaced. Our results hold across different attribute-similarity and node-distance criteria for assessing a node's suitability as a replacement for a removed node.

Beyond terrorist, social, and supply chain networks, we demonstrate that identifying structurally important nodes that have unique attributes is useful in other domains, such as the analysis of biomolecule structures. For example, we find that UniqueRank identifies atoms that are both structurally well-connected and have unique chemical environments, yielding unique interactions with other atoms.

In summary, our main contributions are as follows:
\begin{itemize}
\item We propose \textbf{UniqueRank}, a new method for identifying nodes that both are structurally important and have unique attributes.
\item We provide a comprehensive evaluation of UniqueRank on real-world social, terrorist, and supply chain networks, demonstrating that the removal and attempted replacement of top-ranked UniqueRank nodes results in greater efficiency reduction compared to traditional and widely used ranking methods.
\item We show the broader applicability of UniqueRank by applying it to biomolecular structures.
\end{itemize}
%{\color{red}SHOULD WE PUT A PAPER ORGANIZATION PARAGRAPH HERE???? I THINK SO :)}

The remainder of this paper is organized as follows. Section~\ref{sec:related-work} reviews related work on node ranking, network resilience, and the integration of node attributes into importance measures. Section~\ref{sec:uniquerank} formally defines the problem of identifying structurally important nodes with unique attributes and introduces the UniqueRank method, providing theoretical insights and analysis of its properties. Section~\ref{sec:experiments} describes our experimental setup and results, including details on real world social network datasets and evaluation metrics. Finally, Section~\ref{sec:biomolecule} applies UniqueRank to the biomolecular domain.
% You must have at least 2 lines in the paragraph with the drop letter

\section{Related Work} \label{sec:related-work}

\subsection{Node Removal Attacks}

\textbf{Node Removal and Network Disruption.} The consequences of node removal have been extensively explored in network science~\cite{bellingeri2020link}. Several studies have found that random removal of nodes often has limited impact on overall network functionality~\cite{Amancio_2015}, but targeted removal of high-degree nodes is much more likely to cause substantial disruption~\cite{Albert2000-cm,CallawayNewmanStrogatzWatts+2006+510+513,Albert_2002,holme2002attack}. Numerous works have focused on developing attack strategies to identify individual nodes or sequences of nodes whose removal maximizes network disruption~\cite{chen2013node,JAHANPOUR20133458,Boldi2013-wp,Iyer2013-fm,BELLINGERI2014174,gallos2005stability,NIE2015248,Tian2017-en,NGUYEN2019121561,Morone2015-cg,Schneider_2012inverse}. Research on node attacks is particularly useful for social networks, where strategic node removal can provide insights into how to hamper information flow, for problems such as minimizing disease spread by removing highly connected individuals~\cite{P.Holme_2004,Wang_2015immunity,schneider2011suppressing,bellingeri2015optimization,chen2008finding,Hadidjojo2011-equal}, identifying the most influential contributors in citation networks~\cite{Pan_2012strength}, or disrupting criminal operations by targeting key actors in criminal networks~\cite{AGRESTE201630,requiao2018topology,petersen2011node}. Moreover, Qi et al.~\cite{qi2013terrorist} introduced a Laplacian centrality measure that accounts for the reorganization of network paths after node removal, demonstrating that networks tend to adapt more poorly when the highest-ranked nodes are removed.

%{\color{red}THIS LAST SENTENCE DOESN'T MAKE SENSE.}

\vspace{0.4em}
\noindent \textbf{Application-Specific Disruption.} A large amount of literature on network disruption is tailored to particular applications, such as supply chain and transportation networks. Recent work investigated robustness of supply chain networks to disruptions, and the impact of network topology on robustness~\cite{zhao2019supply,Thadakamaila2004survivability,zhao2011analyzing}, as well as methods to construct more robust supply chain networks~\cite{tian2021research,Shi2012ADM,zhao2011achieving,bimpikis2019supply}. In a more specific application,~\cite{Chen2023-ue} investigates the evolution of cold chain agricultural product movement in China following network disruptions during COVID-19.  Various works have explored the effect of transportation network disruptions, such as the I-35W bridge collapse, on travel and traffic behavior~\cite{zhu2012disruptions}. This literature includes investigation of how traffic evolves after a network disruption~\cite{HE2012modeling}. 

\subsection{Node Ranking Methods}

\textbf{Traditional Centrality Metrics.} Ranking nodes by structural importance is a foundational task in network analysis, and has historically relied on core centrality metrics such as degree, closeness, betweenness, and eigenvector centrality~\cite{Rodrigues2019network,Das2018-aj}. Closeness centrality extends degree centrality by considering a node's position within the entire network, rather than only its immediate neighbors, while betweenness centrality captures how influential a node is in facilitating connections between other nodes. However, both closeness and betweenness centrality are based on the assumption that information flows along shortest paths and are computationally expensive to calculate~\cite{Landherr2010-fg}. Eigenvector centrality further captures influence of a node by considering both the number and importance of connected nodes, but is always zero for directed acyclic graphs~\cite{BONACICH2007555,grassi2007some}. To measure the impact of node removal, we use network efficiency~\cite{latora2001efficient}, defined as the sum of the reciprocals of the shortest path lengths between all node pairs. Many existing methods focus on identifying those structurally important nodes whose removal most drastically lowers efficiency~\cite{chen2013node,JAHANPOUR20133458,Boldi2013-wp,Iyer2013-fm,BELLINGERI2014174,gallos2005stability,NIE2015248,Tian2017-en,NGUYEN2019121561,Morone2015-cg,Schneider_2012inverse}. However, these methods often do not consider the outcome after possible node replacement.

\vspace{0.4em}
\noindent \textbf{Extended Ranking Methods.} Many ranking methods have been proposed to improve upon the traditional centrality metrics. HITS is an iterative algorithm that ranks webpages by updating an authority score, based on hyperlinks pointing to each page, and a hub score, based on hyperlinks leading out of a page~\cite{Kleinberg99hits}. A weighted aggregation of centrality metrics presented by~\cite{yang2019node} and~\cite{NEWMAN200539} proposes a random walk method to efficiently approximate betweenness centrality, and~\cite{bandes2001faster} also proposes faster methods. PageRank~\cite{Page1998PageRank,brin1998anatomy} is a Markov Chain--based node ranking method originally used to rank webpages in Google search engine.  %{\color{red}$\leftarrow$TODO: fill in this citation (actually, is everything from ``It assumes . . .'' to the end of this paragraph better put in Section~\ref{sec:uniquerank}?)}

Recent importance ranking methods use deep learning, such as graph neural networks~\cite{QU2023gnr,maurya2021graph,li2024novel,park2019estimating,ergashev2023learning,he2022gnnrank}. Others consider dynamic graphs~\cite{chen2021node}, leverage entropy-based measures~\cite{Liu2023structure,Yu2022-identifying} and other graph structural information~\cite{ZAREIE2018hierarchical,WANG2017ranking,cheng2011virtual,SALAVATI2019ranking} to inform node rankings. 

\vspace{0.4em}
\noindent \textbf{Ranking with External Node Attributes.} While the methods above aim to measure structural importance of nodes, many real-world networks are also characterized by node attributes that encode additional, often domain-specific, information (e.g., organizational memberships). Few existing approaches consider such attributes when ranking nodes by importance. Most methods that do incorporate attributes rely on structural metrics such as h-index---defined as the maximum integer $h$ such that at least $h$ nodes have degree at least $h$~\cite{Lu2016-ma}---to improve the performance of ranking nodes by importance~\cite{Sheikhahmadi2022multi,LIU2015importance,yang2018multi,gong2016virtual,cao2018novel}. However, the same topological attributes are used to improve ranking irrespective of the specific network’s structure. By contrast, many real applications involve node attributes that are not derivable from the graph itself, whose meaning is dictated by the problem domain.

Research on ranking nodes with external attributes---additional node-level information that is not derivable from the graph structure---is much more sparse. In \cite{benyahia2015centrality}, authors incorporate attribute information into the centrality calculation, while \cite{sanchez2014local} uses external numerical node attributes to rank outliers, which is a different task from importance ranking; \cite{lopes2017use} incorporates node attributes to perform social network analysis for the particular application of international trade. For our problem of interests, the most relevant method is AttriRank~\cite{hsu2017unsupervised}, which extends PageRank by incorporating node attributes. The main difference is an additional assumption that nodes with similar attributes should share similar rankings.

\subsection{Gap and Motivation} 

Despite ongoing research in node removal and ranking, no general method simultaneously addresses (1) structural importance, and (2) difficulty of replacement based on external attributes. While \cite{dearruda2024assigningentitiesteamshypergraph} recently extended the team formation task~\cite{juarez2021comprehensive} to investigate the issue of removing and possibly replacing nodes having certain skills, the team formation includes unique constraints (e.g., budgets, energy specifications, and multi-member teams) different from our setting.  In \cite{Romanini2021-privacy} authors explored the privacy and uniqueness of neighborhoods in social networks, but do not address structural importance. Therefore, identifying nodes that are both structurally important and difficult to replace due to unique attributes remains an open problem that we address in this work.

\section{UniqueRank} \label{sec:uniquerank} 
In this section, we introduce UniqueRank, a method for identifying nodes that are important structurally and have unique attributes. We first point out the shortcomings of naive approaches, then present our Markov Chain-based model and its refinement step, and conclude with a complexity analysis.

\subsection{Overview and Example}
We consider a network $G = \langle V, E, X \rangle$, where $V$ is the set of nodes, $E$ is the set of edges, and $X \in \mathbb{R}^{K \times N}$ is a matrix representing node attributes. Each node $i \in V$ is characterized by a vector of attribute values ${x_{1,i}, x_{2,i}, \ldots, x_{k,i}}$. In practical applications---such as terrorist or supply chain networks---nodes can correspond to individuals or facilities, edges represent interactions or transactions, and attribute values may encode roles, organizational memberships, or other domain-relevant features.

\begin{figure}[]
    \includegraphics[scale=.62]{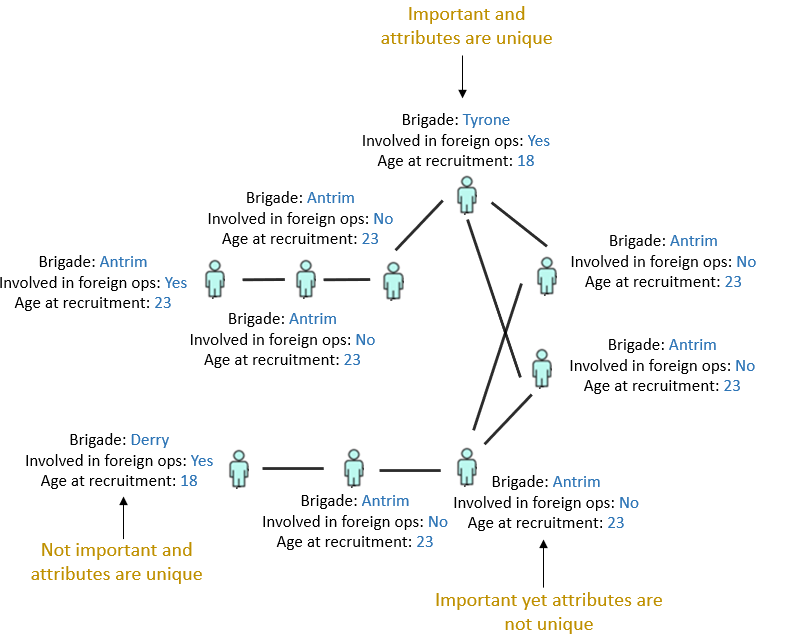}
    \caption{A (synthetic) motivating example of an important application of UniqueRank: sub-graph of a terrorist network. Attribute categories are inspired by the Provisional Irish Republican Army network.}
    \label{fig:small-ex}
\end{figure}

Figure~\ref{fig:small-ex} shows a sample subgraph of a terrorist network. The top-most node is assigned the highest rank, because it is not only structurally central with many connections but also possesses particularly unique attributes. In contrast, the third node from the left on the bottom row is well-connected but lacks unique attributes, while the left-most node on the bottom has unique attributes but few connections. Therefore, both are ranked lower than the top-most node.

\subsection{Naive approaches and issues}
\label{sec:naive}

We first consider several naive approaches for identifying structurally important nodes with unique attributes by combining measures of structural importance ($i$) and attribute uniqueness ($u$) using simple functions, such as
\[
\text{i} + \text{u}, \quad 
\text{i} \times \text{u}, \quad 
\text{or} \quad \sqrt{\text{i}^2 + \text{u}^2}.
\]
However, these combined metrics often need to be tailored to the specific topology and attribute distribution of each network, limiting their general applicability.

To visualize structural importance and uniqueness of nodes, we map node importance to the x-axis and attribute uniqueness to the y-axis in a two-dimensional plane. Ideally, nodes in the top-right corner would be both structurally important and have unique attributes in their local neighborhoods. However, we find that when plotting nodes in real networks on such a coordinate plane, no clear top-right cluster emerges as in Figure~\ref{fig:real-plot}. Although combining importance ($i$) and uniqueness ($u$) through a simple function may be effective for convex distributions, our empirical results indicate that the actual distribution is often concave. Therefore, many such functions disproportionately prioritize nodes with high attribute uniqueness but low structural importance---opposite of our goal.
This observation highlights the need for careful selection of both parameters and functional form when designing node-ranking strategies that jointly consider structural importance and uniqueness, because optimal choices may vary significantly across diverse real-world attributed networks.

\begin{figure}[H]
\centering
    \includegraphics[scale=.3]{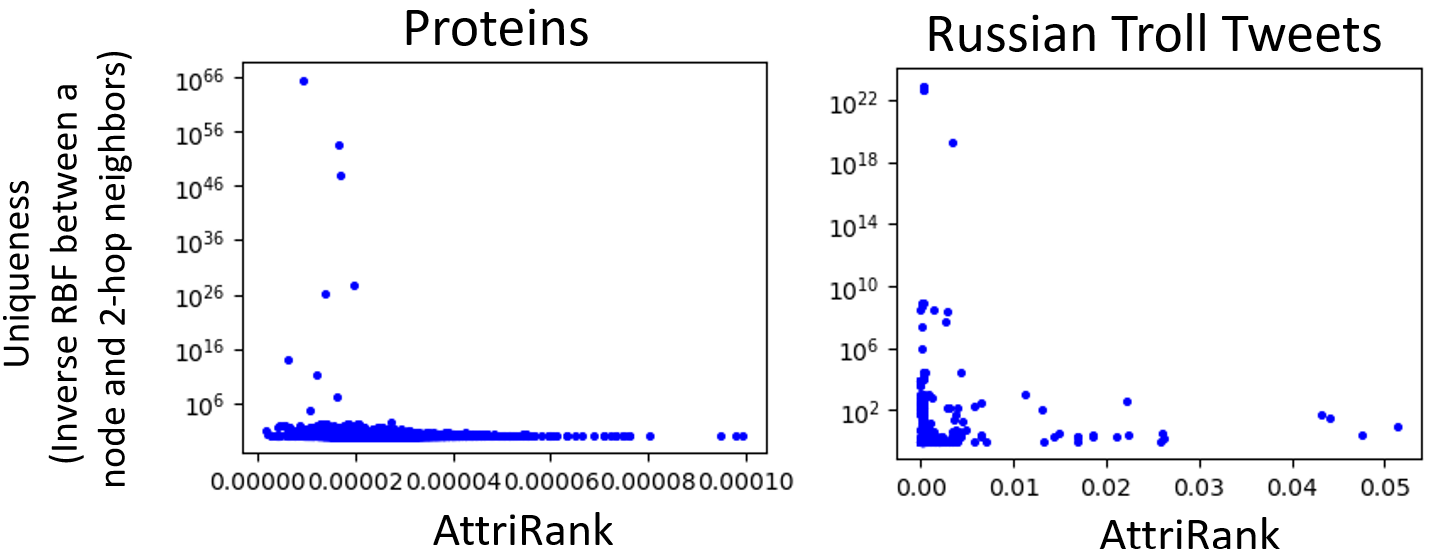}
    \caption{Example plots for real networks, where the x-axis is structural importance (AttriRank) and the y-axis is log-scale attribute uniqueness (formally defined in Equation 7, as an inverse of similarity, defined in Equation 6 and consistent with \cite{hsu2017unsupervised}, among a node and its neighbors).}
    \label{fig:real-plot}
\end{figure}

\subsection{Method} \label{sec:method}

Ranking important nodes in a network has been well studied, with one of the most widely used methods being PageRank. PageRank assumes that nodes are important if many other high-ranking nodes point to them, and is formalized as a Markov Chain with update rule: 

\begin{equation}
\pi^{(t+1)}=(1-d)\frac{1}{N}1+dP\pi^{(t)}
\end{equation}
where $\pi$ is the vector of rankings for each node at a time step $t$, $P$ is the matrix of transition probabilities among nodes, $N$ is the number of nodes in the network, and $d$ is a probability of following a link to a neighbor rather than jumping to a random node. Related literature empirically finds $d$ to be $0.85$~\cite{Page1998PageRank}.

While PageRank assesses node importance solely through structural information, real-world networks often contain node attributes, such as organizational roles or categorical labels, that can be critical for meaningful ranking. To account for this, AttriRank~\cite{hsu2017unsupervised} extends the PageRank model by incorporating node attribute similarity into the ranking process. In AttriRank, the random walk is allowed to move not only along the edges of the original graph, but also across nodes that share similar attribute values. This results in an updated ranking that balances both structural connectivity and attribute similarity. However, AttriRank emphasizes similarity rather than uniqueness, and therefore does not explicitly prioritize structurally central nodes with unique attributes.

Building on both PageRank and AttriRank, UniqueRank introduces a provably efficient method that identifies nodes with both high strucural importance and attribute uniqueness, following  two steps:
\vspace{0.5em}

\noindent(1) A \textit{Markov Chain} approach, which identifies a promising subset of nodes that are both structurally important and have unique attributes.

\vspace{0.3em}
\noindent(2) A \textit{refinement} step, which uses this subset to construct a high-confidence set of structurally important nodes with unique attributes.
\subsubsection{Markov Chain model} \label{sec:markov-chain} 
\noindent \textbf{Assumptions.} UniqueRank is grounded in the same assumptions as AttriRank, with a strong emphasis on the PageRank assumption: 

\vspace{0.5em}

\noindent\textit{PageRank assumption}. A node has a higher ranking if many high ranking nodes direct to it. 

\vspace{0.5em}

\noindent compared to the attribute assumption:

\vspace{0.5em}
\noindent\textit{Attribute assumption}. If a pair of nodes $i,j$ have similar attribute values $x_i \approx x_j$, then they should have similar ranking scores $\pi_i \approx \pi_j$.

\vspace{0.5em}

\noindent UniqueRank places greater weight on the PageRank assumption, controlled by a hyperparameter $d$.\\

\noindent Because structurally important nodes with unique attributes rank higher, the PageRank assumption specifies that such nodes are linked by other high-ranking nodes---whether due to their structural importance, unique attributes, or both. The attribute assumption, which is common in machine learning, indicates that nodes with similar attributes should share similar rankings. Although these two assumptions may sometimes contradict each other, UniqueRank prioritizes the PageRank assumption, setting $d$ to $0.85$.

\vspace{1em}
\noindent\textbf{Update rule.} Formally, the update rule for the Markov Chain model is 
\begin{equation}  \label{eq:update}
\pi^{(t+1)}=(1-d)Q\pi^{(t)} +dP\pi^{(t)}
\end{equation}
where $Q$ is a transition matrix for $H$ such that
\begin{equation}  \label{eq:transH}
q_{ij}=\frac{s_{ij}}{\Sigma_{k\in V} s_{kj}}
\end{equation}
where $P$ is a transition matrix for $G$ such that
\begin{equation}  \label{eq:transG}
p_{ij}=\frac{w_{ij}}{\Sigma_{n\in neighbors(i)} w_{in}}
\end{equation} and
\begin{equation}  \label{eq:weight}
w_{ij}=\frac{1}{\alpha + (1-\alpha) \min_{n\in neighbors(t)} s_{nt}}
\end{equation}
The vector $\pi^{(t)}$ stores the ranking of each node at time step $t$, with $\pi^{(0)}$ initialized randomly. The similarity of attributes between nodes $i$ and $j$ is denoted by $s_{ij}$. To control the trade-off between structural importance and attribute uniqueness, we introduce a hyperparameter $\alpha$: as $\alpha \rightarrow 1$, the edge weights $w_{ij}$ approach $1$, effectively recovering AttriRank rankings based primarily on attribute similarity. On the other hand, as $\alpha \rightarrow 0$, the contribution of attribute uniqueness increases. Conceptually, the model can be interpreted as a random walk:
\vspace{0.4em}

\noindent \textit{The walk on graph $G$}, with transition probabilities indicated by transition matrix $P$, aligns with the PageRank assumption and is weighted by $d$.  

\vspace{0.4em}

\noindent \textit{The walk on graph $H$}, with transition probabilities indicated by transition matrix $Q$, aligns with the AttriRank assumption and is weighted by $1-d$.

\begin{figure}[H]
    \includegraphics[scale=.52]{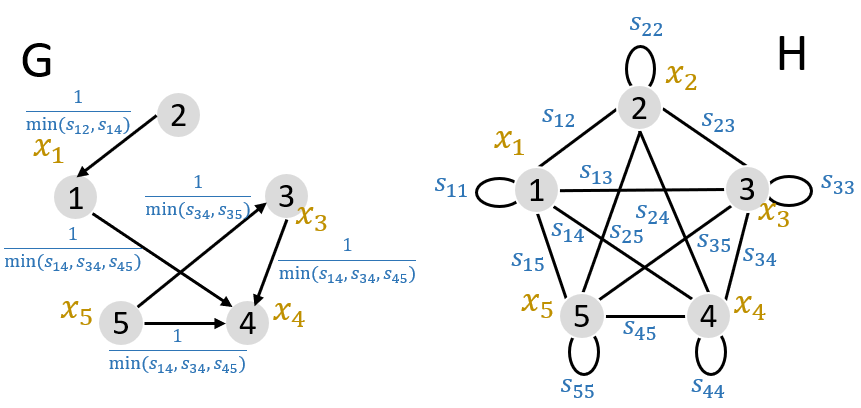}
    \caption{A conceptual view of UniqueRank’s Markov Chain, integrating two walks: one that captures the PageRank assumption (graph $G$) and one that captures the attribute assumption (graph $H$).}
    \label{fig:walks}
\end{figure}

\noindent\textbf{Random walk perspective.} From the perspective of a random walk, a single ``walker'' transitions between nodes according to two underlying graphs: $G$, representing the structural topology, and $H$, encoding attribute similarities, as in Figure~\ref{fig:walks}. A step on $G$ enforces the PageRank assumption (structural importance), while a step along $H$ enforces the attribute assumption (attribute similarity or uniqueness). The update rule combines these two random walks through a linear combination, with the damping factor $d$ controlling the probability of following a structural rather than an attribute-based transition. Consistent with PageRank, UniqueRank sets $d = 0.85$ by default, therefore prioritizing the PageRank assumption over the attribute assumption.

The topology of graph $G$ is identical to the original graph as in Figure~\ref{fig:walks}, ensuring that random walks are consistent with the true edge structure. Unlike traditional methods~\cite{Page1998PageRank,brin1998anatomy}, which assign equal transition probabilities to all neighbors, UniqueRank increases the likelihood of transitioning to neighbors with more distinct attributes. Therefore, movement from node $i$ to a neighbor $j$ is proportional to how different $j$'s attributes are, relative to other neighbors.

The transition probability matrix $P \in \mathbb{R}^{N \times N}$ encodes the chance of moving from node $i$ to node $j$ along edges of $G$, with $p_{ij}$ representing the probability of such a move. Each edge $(i, j)$ receives a weight $w_{ij}$, defined as the maximum attribute difference between node $j$ and its neighbors (see Equation~\ref{eq:weight}), therefore capturing a local neighborhood of at most two hops from $i$. To obtain $p_{ij}$, each $w_{ij}$ is normalized by the sum of all outgoing edge weights from node $i$, as formalized in Equation~\ref{eq:transG}.

Graph $H$ is the same as in AttriRank, where the topology of $H$ is a complete graph as in Figure~\ref{fig:walks}. Transition probabilities are proportional to attribute similarity $s_{ij}$, which is computed using the Radial Basis Function (RBF) kernel, consistent with AttriRank:
\begin{equation}
s_{ij} \label{eq:sim} \equiv e^{-\gamma ||x_i-x_j||^2}.
\end{equation}
Hsu et al.~\cite{hsu2017unsupervised} have shown that RBF-based similarity outperforms measures such as cosine similarity and Euclidean distance in this context.

The transition probabilities for $H$ are captured by the matrix $Q \in \mathbb{R}^{N \times N}$, where $q_{ij}$ denotes the probability of moving from node $i$ to node $j$. This is computed by normalizing $s_{ij}$ by the sum of attribute similarities from node $i$ to all other nodes, as in Equation~\ref{eq:transH}.

\vspace{.2em}
\noindent We prove that rankings of $\pi$ converge after infinitely many iterations:
\begin{proposition}  \label{th:stationary}
The vector $\pi$ always reaches a stationary probability distribution.
\end{proposition}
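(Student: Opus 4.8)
The plan is to read the update rule~\eqref{eq:update} as a single linear iteration $\pi^{(t+1)} = M\pi^{(t)}$ with $M := dP + (1-d)Q$, and then to show that $M$ is a \emph{positive stochastic matrix}, so that the classical convergence theorem for finite, irreducible, aperiodic Markov chains---equivalently, the Perron--Frobenius theorem for primitive nonnegative matrices---forces $\pi^{(t)} = M^{t}\pi^{(0)}$ to converge to a stationary distribution. Thus there are essentially three things to establish: that $M$ is stochastic, that $M$ is entrywise positive, and then the invocation of the standard theorem.

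First I would check that $M$ is stochastic. From~\eqref{eq:transH}, every column of $Q$ sums to one, since $\sum_{i\in V} q_{ij} = \bigl(\sum_{i\in V} s_{ij}\bigr) / \bigl(\sum_{k\in V} s_{kj}\bigr) = 1$; treating $P$ analogously as the (column-)stochastic transition matrix of the structural walk underlying the PageRank term---with the standard PageRank normalization and treatment of nodes lacking structural neighbors---$P$ is stochastic as well, and the edge weights~\eqref{eq:weight} are finite and strictly positive because the minimum similarity appearing there is itself a value of the strictly positive RBF kernel. Since $d\in(0,1)$, $M = dP + (1-d)Q$ is a convex combination of stochastic matrices, hence stochastic, so it preserves the total mass of $\pi^{(t)}$.

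Next I would establish positivity. By~\eqref{eq:sim}, $s_{ij} = e^{-\gamma\|x_i - x_j\|^{2}} > 0$ for every pair $i,j$, and the normalizer in~\eqref{eq:transH} is finite and positive, so every entry of $Q$ is strictly positive. Because $d = 0.85 < 1$, the coefficient $1-d$ is strictly positive and $dP$ is entrywise nonnegative, so $m_{ij} = d\,p_{ij} + (1-d)\,q_{ij} \geq (1-d)\,q_{ij} > 0$ for all $i,j$. Hence $M$ is a primitive (``regular'') stochastic matrix, and the associated chain is irreducible and aperiodic. Perron--Frobenius then yields that $1$ is a simple eigenvalue of $M$, every other eigenvalue has modulus strictly less than $1$, and there is a unique stationary distribution $\pi^{\star} > 0$ with $M\pi^{\star} = \pi^{\star}$ and $\mathbf{1}^{\top}\pi^{\star} = 1$. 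Expanding $\pi^{(0)}$ in the eigenbasis of $M$ (equivalently, using the power-iteration convergence rate governed by the second-largest eigenvalue modulus) gives $\pi^{(t)} \to (\mathbf{1}^{\top}\pi^{(0)})\,\pi^{\star}$ for any starting vector; rescaling the randomly initialized $\pi^{(0)}$ to unit mass makes the limit a genuine probability distribution, which is the claim.

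The step that actually drives convergence---strict positivity of $M$---is immediate from the positivity of the RBF kernel together with $d<1$, so I do not expect any delicate estimate. The only real friction is bookkeeping: stating the stochasticity of $P$ precisely (normalization convention and nodes with no structural neighbors, both routine in the PageRank literature) and confirming that $\pi^{(0)}$ is, or is rescaled to be, a probability vector so that the limit has total mass one.
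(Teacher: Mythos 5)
Your proposal is correct and follows essentially the same route as the paper: both arguments rest on the strict positivity of the RBF similarities $s_{ij}$, which makes the combined transition matrix $(1-d)Q + dP$ entrywise positive, hence irreducible and aperiodic, so the standard convergence theorem for finite Markov chains applies. Your version merely spells out the stochasticity bookkeeping and the Perron--Frobenius/power-iteration details that the paper leaves implicit.
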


\begin{proof}
We denote the Markov Chain model as $R\equiv (1-d)Q+dP$ and prove that it is both irreducible and aperiodic.

\vspace{.5em}
\noindent \textit{Irreducible}: Every node can be reached from every other node. Since $q_{ij} \propto s_{ij}$, and $s_{ij}$ (defined in Equation~\ref{eq:sim}) is strictly positive, it follows that $r_{ij}>0$ for all $i, j$.  

\vspace{.5em}
\noindent \textit{Aperiodic}: Each node has a nonzero probability of returning to itself because $q_{ij} \propto s_{ij}$, and $s_{ij}>0$. Thus, $r_{ij}>0$.
% \end{itemize}
Any irreducible and aperiodic Markov Chain model converges to a stationary probability distribution after infinitely many iterations~\cite{Häggström_2002irreducible}. 
\end{proof}

\noindent\subsubsection{Refinement step.}To solidify the final selection of nodes that are both structurally important and have distinct attributes with respect to neighbors, we apply a refinement procedure using the top-ranked nodes outputted by the Markov Chain. This procedure systematically identifies the nodes that most often surpass the initial top-ranked candidate nodes in both structural importance and attribute uniqueness. 

\vspace{0.3em}
\noindent\textbf{Algorithm description.} The refinement algorithm considers the two-dimensional space where each node is represented by a point with coordinates given by its structural importance (x-axis) and attribute uniqueness (y-axis), as shown in Figure~\ref{fig:boxes}. 

Given the set $T$ of $k$ top-ranked nodes identified by the Markov Chain step, the algorithm considers, for each $j\in T$, the region containing all nodes that have both \emph{greater importance} and \emph{greater uniqueness} than node $j$--- that is, all points strictly to the right and above $j$ in this 2D plane. Therefore, for each node $i$, the algorithm counts how many such regions it occupies; that is, for how many $j\in T$ it holds that $a_i > a_j$ and $u_i > u_j$. This count is denoted as $b(i)$. 

Finally, the $k$ nodes with the largest $b(i)$ values are selected as the refined top-$k$ set. In case of ties, nodes with larger combined score ($a_i+u_i$) are selected. This ensures that the final set contains the nodes that are most frequently dominant in both dimensions with respect to the initial top candidates, and that no node outside the set is both more important and more unique than a node within the set. 

\begin{algorithm}{\footnotesize
\DontPrintSemicolon  % optional: removes semicolons  
\caption{\footnotesize Algorithm to construct a higher confidence set of structurally important nodes with unique attributes}\label{alg:box}
\SetKwInOut{Input}{Input}
\SetKwInOut{Output}{Output}
\Input{$S = \{(a_i, u_i) \mid i \in V$, $a_i$ is the importance of $ i$, $u_i $ is the uniqueness of $ i\}$, and indices $T$ corresponding to the top $k$ nodes from the Markov Chain model step}
\Output{Indices of the final top $k$ nodes}
min\_uniqueness $\gets 1$\;
min\_importance $\gets 1$\;
\For{$i \in T $}{
$\textrm{min\_importance} \gets \min(\textrm{min\_importance}, a_i)$\;
$\textrm{min\_uniqueness}\gets \min(\textrm{min\_uniqueness}, u_i)$\;}
$b \gets \{\}$\;
\For{$i\in V$}{
    \lIf{$a_i < \textrm{min\_importance}$ or $u_i <\textrm{min\_uniqueness}$} {
    continue }
    \For{$j \in T $}{
        \lIf{$a_i > a_j$ and $u_i > u_j$} {
        $b[i] \gets b[i]+ 1$ }}}
Sort keys in $b$ by value\;
Return the top $k$ keys in $b$ by value; given a tie, return keys of the tied value that have greater $a_i + u_i$ until $k$ keys have been returned\;}

\end{algorithm}

\vspace{0.3em}
\noindent Let ``importance" be measured by the AttriRank metric. The ``uniqueness" $u_i$  of node $i$ is defined as: 
\begin{equation} \label{eq:uniq}
u_i=\frac{1}{\frac{1}{|N(i)|}\sum_{j\in N(i)}s_{ij}}
\end{equation}
where $N(i)$ is the set of neighbors of $i$, and $s_{ij}$ is the similarity between nodes $i$ and $j$.

\begin{proposition}
\label{th:box}
Algorithm~\ref{alg:box}, given node importance and uniqueness scores as input, returns a set of the top-$k$ nodes such that for any node $n$ outside this set, there does not exist a node $m$ inside the set for which $n$ is both more important and more unique than $m$.
\end{proposition}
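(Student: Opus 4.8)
The plan is a proof by contradiction against the selection rule of Algorithm~\ref{alg:box}. Aside from the trivial fact that the algorithm returns exactly $k$ indices (assuming $|V|\ge k$), the substantive content is the stated dominance property, so suppose the returned set $R$ violates it: there are nodes $n\notin R$ and $m\in R$ with $a_n>a_m$ and $u_n>u_m$. Throughout, I read $b(i)$ as the number of ``regions'' node $i$ occupies, i.e.\ $b(i)=|\{\,j\in T : a_i>a_j \text{ and } u_i>u_j\,\}|$, defined for every node that survives the filtering test $a_i\ge \min_{j\in T} a_j$ and $u_i\ge \min_{j\in T} u_j$; the argument is insensitive to how the dictionary $b$ is initialized.

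First I would verify that $n$ is not silently discarded by the \texttt{continue} line, which is the one place the argument could break. Since $R$ contains only filtered nodes, $m\in R$ forces $a_m\ge\min_{j\in T} a_j$ and $u_m\ge\min_{j\in T} u_j$; combined with $a_n>a_m$ and $u_n>u_m$ this gives $a_n\ge\min_{j\in T} a_j$ and $u_n\ge\min_{j\in T} u_j$, so $n$ also survives the filter and has a well-defined count $b(n)$.

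Next I would compare $n$ and $m$ in the ordering used for selection. If some $j\in T$ satisfies $a_m>a_j$ and $u_m>u_j$, then by transitivity $a_n>a_m>a_j$ and $u_n>u_m>u_j$, so every region occupied by $m$ is also occupied by $n$; hence $b(n)\ge b(m)$. Moreover $a_n+u_n>a_m+u_m$ follows directly from the two strict inequalities. The selection rule takes the $k$ nodes of largest $b$, breaking ties toward larger $a_i+u_i$, so $n$ strictly precedes $m$ in this order: either $b(n)>b(m)$, or $b(n)=b(m)$ together with $a_n+u_n>a_m+u_m$ (and these cases are exhaustive, since the combined-score inequality is strict). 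Since $m$ is among the first $k$ nodes in this order and $n$ comes before $m$, node $n$ must also be among the first $k$, i.e.\ $n\in R$ --- contradicting $n\notin R$. This establishes the proposition. I expect the filtering step to be the only delicate point; everything else is transitivity of $>$ plus the tie-breaking convention, both routine.
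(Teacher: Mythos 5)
Your proof is correct and takes essentially the same route as the paper: a contradiction argument in which dominance of $n$ over $m$ forces $b(n)\ge b(m)$, and the selection order with its tie-break then forces $n$ into the returned set, contradicting $n\notin R$. You are in fact slightly more careful than the paper's own proof, both in verifying that $n$ is not discarded by the filtering step and in using the algorithm's stated tie-break on $a_i+u_i$ (the paper's proof text breaks ties on $u_i$ alone, which does not match its own pseudocode, though the argument goes through either way).
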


\begin{figure}[H]
\centering
    \includegraphics[scale=.38]{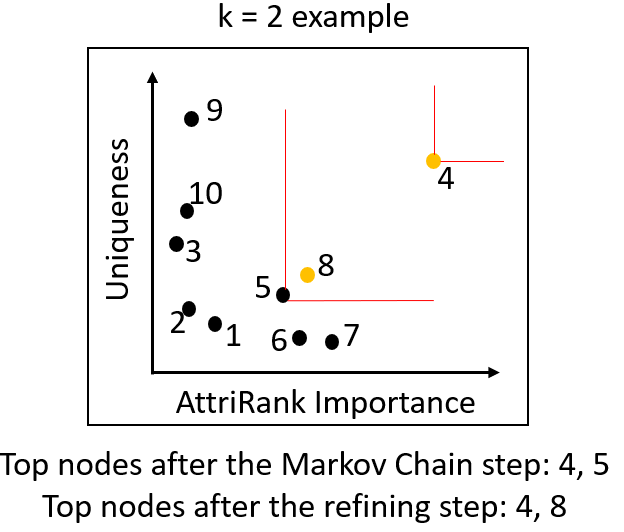}
    \caption{An example illustration of the refinement step for selecting the top 2 nodes in a 10-node graph. The sampled nodes (connected to red lines) after the Markov Chain model step are used to identify a higher confidence set of nodes (yellow) during the refinement step.}
    \label{fig:boxes}
\end{figure}

\begin{proof}

Assume that Proposition~\ref{th:box} is false. In particular, assume node $m \in S$ is among the top $k$ structurally important and unique nodes, but there is a node $n \notin S$ that is both more important and more unique yet not included in the top $k$. By Algorithm~\ref{alg:box}, $b(m) \ge b(n)$. However, if $n$ exceeds $m$ in both importance and uniqueness, it follows that $b(n) \ge b(m)$. The only way to satisfy both inequalities is if $b(m)=b(n)$. During this tie, Algorithm~\ref{alg:box} selects nodes with higher $u_i$ (uniqueness) first, which holds for $n$. Therefore, $n$ must also be in the top $k$, contradicting the initial assumption that $n\notin S$.

\end{proof}

\subsection{Computational Complexity} \label{sec:computational-complexity}

The Markov Chain step in UniqueRank is provably efficient: \cite{hsu2017unsupervised} show that AttriRank achieves linear time complexity in the number of edges by using an iterative approximation to compute ranking scores, rather than solving for the exact stationary distribution directly. UniqueRank inherits this approximation-based efficiency and adds a one-time cost of $O(|V| + |E|)$ for computing edge weights $w_{ij}$. The subsequent refinement step in Algorithm~\ref{alg:box} has an average complexity of $O(k|V| + k \log k)$, where $k$ is the number of top-ranked nodes (i.e., structurally important with unique attributes) produced by the Markov Chain model, which is often set to be very small. In the unlikely event that the output of the Markov Chain model contradicted the PageRank assumption---that is, if high ranked nodes did not have edges with other high ranking nodes---the worst case complexity for Algorithm~\ref{alg:box} could reach $O(k|V|+|V|\log|V|)$. However, the PageRank assumption is widely used and well validated (e.g., in Google), and for simple, symmetric networks, we verify that the Markov Chain model's results align with this assumption in practice.

\subsection{Simple Symmetric Networks} \label{sec:symmetric} 

We first show that the Markov Chain model in UniqueRank indeed ranks nodes with unique attributes and high structural importance more highly on symmetric graphs that have simple attribute assignments. A graph is symmetric if, for any two pairs of adjacent vertices $(u_1,v_1)$ and $(u_2,v_2)$, there exists an automorphism $f:V(G) \to V(G)$ such that $f(u_1)=u_2$ and $f(v_1) = v_2$. In such graphs, many nodes share identical structural roles. To distinguish among these, we introduce slight attribute modifications—labelled as ‘perturbed’ attributes—to a subset of nodes, while the remaining nodes retain the default attribute values. In this setting, our ground truth is that nodes with perturbed attributes should be considered more unique than their structurally identical peers.

\begin{observation}
If the damping factor $d = 1$, the Markov chain component of UniqueRank assigns higher ranks to nodes with unique attributes compared to structurally identical nodes with default attributes.
\end{observation}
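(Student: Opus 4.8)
Setting $d=1$ collapses the update rule~\eqref{eq:update} to $\pi^{(t+1)}=P\pi^{(t)}$, so only the structural walk on $G$ survives and the attribute walk on $H$ plays no role. The plan is to exhibit the stationary vector of this walk in closed form and read off the conclusion. The key structural observation is that the edge weight $w_{ij}$ of~\eqref{eq:weight} depends only on the head $j$: it equals $1/\bigl(\alpha+(1-\alpha)\min_{n\in N(j)}s_{nj}\bigr)$, a function of $j$ alone. Writing $\omega_j$ for this common value and $c_i:=\sum_{n\in N(i)}\omega_n$, Equation~\eqref{eq:transG} gives $p_{ij}=\omega_j/c_i$. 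I would first record that a symmetric graph is vertex-transitive, hence $k$-regular for some $k$, and (being connected, as in the constructions of Section~\ref{sec:symmetric}) the associated chain has a unique stationary distribution. Then I would verify by direct substitution into the PageRank-type fixed point $\pi_j=\sum_{i\in N(j)}\pi_i\,p_{ij}$ that
\[
\pi_j \;\propto\; \omega_j\sum_{n\in N(j)}\omega_n
\]
is a stationary vector, hence \emph{the} stationary vector; if $G$ is bipartite and the chain is periodic, one reads $\pi$ as the Ces\`aro limit, which equals the same vector.

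The second ingredient is a monotonicity fact. Since $s_{nj}=e^{-\gamma\|x_n-x_j\|^2}\in(0,1]$ by~\eqref{eq:sim}, we have $\omega_j\ge 1$ for every node, with $\omega_j>1$ exactly when $j$ has a neighbor whose attribute vector differs from $x_j$. In the setting of Section~\ref{sec:symmetric}, where perturbations are sparse so that a perturbed node's neighbors all carry the default attributes, a perturbed node $u$ therefore has $\omega_u>1$, while a default node $v$ whose entire two-hop neighborhood is default has $\omega_v=1$ and $\omega_n=1$ for all $n\in N(v)$, so its unnormalized weight is $\omega_v c_v = k$. On the other hand $\omega_u c_u=\omega_u\sum_{n\in N(u)}\omega_n\ge\omega_u\,k> k$ because each $\omega_n\ge1$. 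Hence $\pi_u>\pi_v$: the perturbed node outranks its structurally identical default counterpart. If, in the same orbit, some default representative happens to sit near a perturbation, I would instead compare $u$ and $v$ through the automorphism carrying $u$ to $v$, or simply restrict the statement to orbits with at most one perturbed representative, which is the regime the ``simple symmetric networks'' are constructed to illustrate.

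The main obstacle I anticipate is bookkeeping rather than a conceptual hurdle: one must fix the (loosely stated) orientation convention of the chain so that the stationary equation genuinely reads $\pi_j=\sum_i\pi_i p_{ij}$ rather than the trivial $P\mathbf 1=\mathbf 1$; one must handle possible periodicity of $P$ on bipartite $G$ by passing to time averages; and one must state precisely what ``simple attribute assignment'' guarantees about the neighborhoods of perturbed nodes, so that $\omega_u>1$ is justified. A further point worth flagging explicitly is that the argument tacitly requires $\alpha<1$: at $\alpha=1$ every weight equals $1$, $\pi$ is uniform on the regular graph, and no node is favored — consistent with the remark that $\alpha\to1$ recovers AttriRank.
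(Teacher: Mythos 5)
Your argument is correct, but it takes a genuinely different and more rigorous route than the paper. The paper supports the observation qualitatively: on a cycle graph with one perturbed node, the weights in \eqref{eq:transG}--\eqref{eq:weight} boost transitions \emph{into} the unique node from its neighbors, so the walk visits it more often and it receives a higher stationary rank; the extension to other symmetric graphs is asserted and then checked empirically. You instead exploit the fact that $w_{ij}$ depends only on the head $j$, so with $d=1$ the chain is a reversible weighted walk with edge weights $\omega_i\omega_j$, and you exhibit the exact stationary vector $\pi_j\propto\omega_j\sum_{n\in N(j)}\omega_n$ on a $k$-regular (vertex-transitive) graph; the inequality then follows from $\omega_j\ge 1$ with strict inequality exactly at nodes having a dissimilar neighbor. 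This buys several things the paper's sketch does not: it is quantitative and graph-independent within the regular class, it correctly accounts for the subtlety that neighbors of the perturbed node also receive boosted weights (the paper's claim of ``uniform transition probabilities among nodes with default attributes'' is not literally true under the head-only weight), it makes explicit that $\alpha<1$ is needed and that periodicity on bipartite graphs must be handled via Ces\`aro averages (the paper's Proposition~\ref{th:stationary} does not apply at $d=1$), and your restriction to default nodes with an all-default two-hop neighborhood is in fact the right hypothesis: on $K_{n}$, for instance, every default node ties with the perturbed node, so strictness genuinely fails without it. Two small caveats: the suggestion to ``compare through the automorphism'' for default nodes near the perturbation does not quite work, since graph automorphisms need not preserve attributes (in that regime your formula still gives $\pi_u\ge\pi_v$, with ties only in degenerate adjacency configurations); and, as you note, the row- versus column-stochastic convention in \eqref{eq:update} must be fixed before reading off stationarity, a looseness inherited from the paper rather than introduced by you.
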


\noindent To support this observation, consider a cycle graph, where each node has exactly two neighbors; under the graph’s symmetry, all nodes are structurally equivalent. If all nodes share the same attributes except for one node possessing unique attributes, the transition matrix $P$ in Equation~\ref{eq:transG}---which incorporates attribute similarity---assigns uniform transition probabilities among nodes with default attributes, but increases the probability of transitioning {\em to} the unique node from its neighbors, since its attribute similarity with others is lower, causing its similarity-based transition weights to be higher. Therefore, the Markov chain random walk is more likely to visit the unique node, and thus the stationary distribution assigns it a higher rank compared to the structurally equivalent nodes with default attributes.

This argument extends to other symmetric network structures as well. We validated this behavior empirically by applying UniqueRank to a range of symmetric graphs with the described attribute configurations. In each instance, UniqueRank assigned higher ranks to those nodes that were structurally important and possessed unique attributes, in agreement with the ground truth.

\begin{table*}[t]{
\begin{tabular}{cccccccc}
\toprule
\textbf{Network} &
  \textbf{\makecell{Edge\\Type}} &
  \textbf{\makecell{\# \\Nodes}} &
  \textbf{\makecell{Node}} &
  \textbf{\makecell{\# \\Edges}} &
  \textbf{\makecell{Edge \\Meaning}} &
  \textbf{\makecell{\# \\Attr}} &
  \textbf{Attribute types} \\ \midrule
\makecell{Provisional Irish \\Republican Army} &
  Undir &
  391 &
  Person &
  864 &
  \makecell{Involvement in an \\act together, friends, \\blood relatives, \\married, etc} &
  22 &
  {\scriptsize\makecell{ Gender;   university; marital status; age at \\recruitment; brigade: Antrim, Derry,   \\Armagh, Down, Tyrone, Fermanagh; senior \\leader; gunman; involvement in:   violent \\activity, nonviolent activity, foreign ops, \\bank   robbery/kidnapping/hijacking/drugs, \\violent activity in foreign ops;   improvised \\explosive device(ied) constructor; \\improvised explosive   device(ied) planter}} \\ \midrule
\makecell{Russian Troll Tweets  \cite{linvill2020troll}} &
  Dir &
  1211 &
  \makecell{User   \\handle} &
  2855 &
  \makecell{Interaction   between \\user handles} &
  20 &
  {\scriptsize\makecell{ Account   category: right troll, left troll, \\fearmonger, hashtag gamer, NonEnglish;   \\account type: Russian, right, left, Ukranian, \\hashtager, Koch; activity; max   followers; \\max following; min followers; min following}} \\ \midrule
Facebook-107 \cite{mcauley2012learning,yang2020scaling} &
  Undir &
  1034 &
  \multirow{3}{*}{Person} &
  26749 &
  \multirow{3}{*}{Friend} &
  168 &
  \multirow{3}{*}{\scriptsize\makecell{ Degree;   education concentration; work \\position; employer; languages known (all   \\one-hot encoded)}} \\ 
Facebook-348 \cite{mcauley2012learning,yang2020scaling} &
  Undir &
  224 &
   &
  3192 &
   &
  41 &
   \\ 
Facebook-3437 \cite{mcauley2012learning,yang2020scaling} &
  Undir &
  534 &
   &
  4813 &
   &
  70 &
   \\ \midrule
\makecell{Supply-chain-5 {\scriptsize(food} \\{\scriptsize preparations)} \cite{williams2018real}} &
  Dir &
  27 &
  \multirow{8}{*}{\makecell{Supply   \\chain \\stage}} &
  31 &
  \multirow{8}{*}{\makecell{Transfer   of items \\between stages}} &
  8 &
  \multirow{8}{*}{{\scriptsize \makecell{Stage   cost; supply chain depth; stage time; \\stage type: manuf, part, retail, trans,   dist}}} \\ 
\makecell{Supply-chain-7 {\scriptsize (construction} \\{\scriptsize machinery and equipment)} \cite{williams2018real}} &
  Dir &
  38 &
   &
  78 &
   &
  8 &
   \\ 
\makecell{Supply-chain-15 {\scriptsize (soap and} \\{\scriptsize detergents) }\cite{williams2018real}} &
  Dir &
  133 &
   &
  164 &
   &
  8 &
   \\ 
\makecell{Supply-chain-20 {\scriptsize (computer }\\{\scriptsize peripheral equipment)} \cite{williams2018real}} &
  Dir &
  156 &
   &
  169 &
   &
  8 &
   \\ 
\makecell{Supply-chain-27 \\{\scriptsize (electromedical and }\\{\scriptsize electrotherapeutic  } \\{\scriptsize apparatus)} \cite{williams2018real}} &
  Dir &
  482 &
   &
  941 &
   &
  8 &
   \\ 
\makecell{Supply-chain-28 {\scriptsize (computer} \\{\scriptsize storage devices)} \cite{williams2018real}} &
  Dir &
  577 &
   &
  2262 &
   &
  8 &
   \\ 
\makecell{Supply-chain-32 {\scriptsize (perfumes,} \\{\scriptsize cosmetics and other toilet}   \\{\scriptsize preparations)} \cite{williams2018real}} &
  Dir &
  844 &
   &
  1685 &
   &
  8 &
   \\ 
\makecell{Supply-chain-37 {\scriptsize (industrial} \\{\scriptsize organic chemicals)} \cite{williams2018real}} &
  Dir &
  1479 &
   &
  2069 &
   &
  8 &
   \\ \bottomrule
\end{tabular}
}\caption{Real-world networks used for evaluation.}
\label{tab:datasets}
\end{table*}

\section{Experiments} \label{sec:experiments} 

We conducted a comprehensive evaluation of our approach across various real-world networks. To illustrate UniqueRank's value, we plot each node's uniqueness and structural importance in a two-dimensional coordinate plane. The horizontal axis represents importance (AttriRank) and the vertical axis reflects logarithmic attribute uniqueness (Equation~\ref{eq:uniq}). In Figure \ref{fig:vis-results}, nodes selected by UniqueRank are highlighted in red, while those by AttriRank are the right-most nodes in the plot. The red nodes selected by UniqueRank incorporate attribute uniqueness into their evaluation, allowing for a slight trade-off in structural importance. In contrast, the right-most nodes selected by AttriRank focus solely on structural importance to identify top-ranked nodes. This visualization includes networks\footnote{https://networkrepository.com/} of enzymes \cite{Borgwardt2005protein,Schomburg2004-vm}, proteins \cite{Dobson2003-zz}, organic molecules \cite{chmiela2017machine} and more.

\begin{figure}[H]
    \includegraphics[scale=.4]{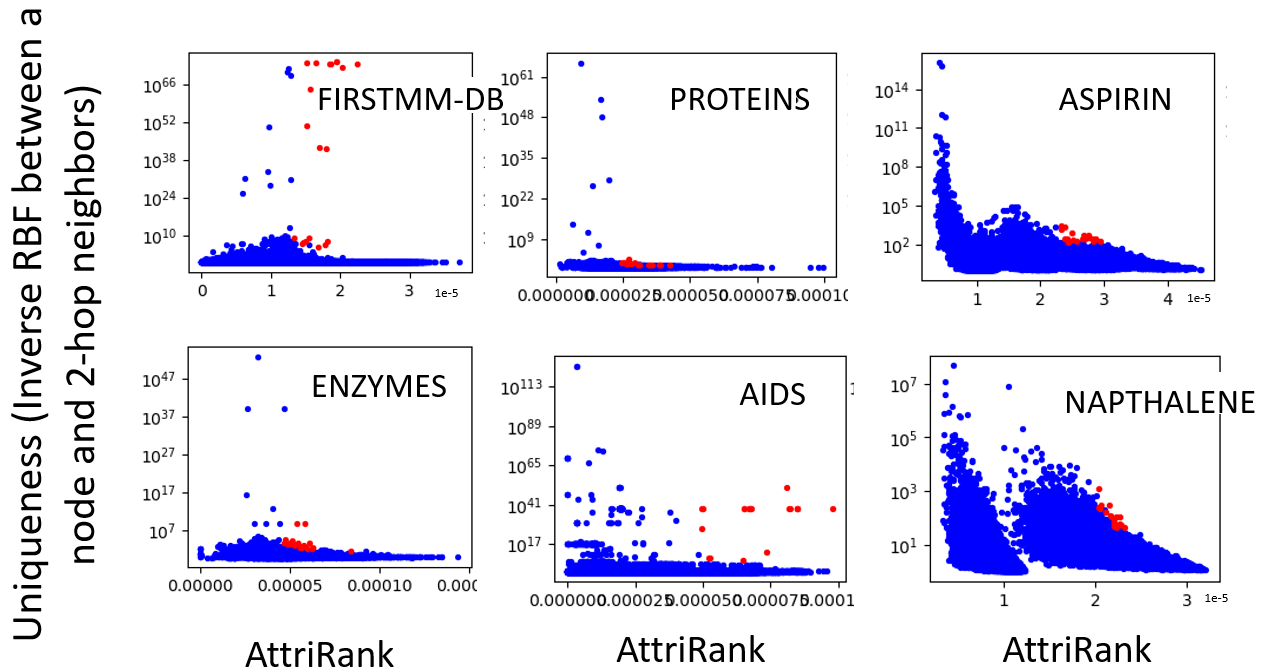}
    \caption[Uniqueness (log scale) and importance of nodes that UniqueRank selects on real networks]{Uniqueness (log scale) and importance of nodes of FIRSTMM-DB~\cite{Neumann2016-jb}, PROTEINS~\cite{Dobson2003-zz}, ASPIRIN~\cite{chmiela2017machine}, ENZYMES~\cite{Borgwardt2005protein,Schomburg2004-vm}, {AIDS\footnote{https://wiki.nci.nih.gov/display/NCIDTPdata/AIDS+Antiviral+Screen+Data}}~\cite{riesen2008iam}, NAPHTHALENE~\cite{chmiela2017machine},  visualized on a logarithmic y-axis scale. }
    \label{fig:vis-results}
\end{figure}

\noindent Next, we begin discussion of the evaluation strategy.

\subsection{Evaluation metric of efficiency reduction} \label{sec:efficiency} 
In real networks such as terrorist networks, social networks, and supply chain networks, identifying structurally important nodes is critical as their removal may greatly affect efficiency of the network. We use Latora and Marchiori’s definition of network efficiency \cite{latora2001efficient}, calculated as
\begin{equation} \label{eq:efficiency} E(G)=\Sigma_{i \neq j \in G} \frac{1}{d_{ij}}\end{equation}
where $d_{ij}$ is the shortest-path distance between nodes $i$ and $j$. To quantify efficiency reduction, we compare the original graph $G$ with a modified graph $G_{\text{mod}}$, which reflects the network after removal and attempted replacement of a key node:

\begin{equation} 1-\frac{E(G_{mod})}{E(G)}\end{equation}

\noindent In real settings, especially in social networks, the network may respond to node removal by selecting a nearby replacement. Consequently, this measure captures how efficiency changes after removal and attempted replacement of a node. To capture the local effect of disruption and reconfiguration, our computation focuses on the node selected for removal and its two-hop neighborhood. By focusing on this subset, our measurement captures the impact on the nodes most directly affected by the removal, without being diluted by distant connections that are less affected. This provides a clearer understanding of how local structural relationships adapt after node removal and attempted replacement, which is particularly relevant in social networks where local ties significantly shape functionality.

\subsection{Evaluation strategy on social networks} \label{sec:evaluation} 
\noindent To evaluate, we follow steps: 

\begin{enumerate}
\item \textit{Node Removal}. Consistent with previous literature \cite{bellingeri2020link,chen2013node,JAHANPOUR20133458,Boldi2013-wp,Iyer2013-fm,BELLINGERI2014174,gallos2005stability,NIE2015248,Tian2017-en}, we remove a top-ranked node from the network.  
\item \textit{Node Replacement} \cite{missaoui2013social}. We search within the removed node's 2-hop neighborhood for a candidate node whose attribute similarity, measured using a radial basis kernel $t(x_i,x_j)$, exceeds a set threshold. If such a similar node exists, we redirect all incoming and outgoing edges from the removed node to the replacement node, effectively simulating a replacement occupying the removed node's position.  

\item \textit{Efficiency reduction computation}. We measure the drop in network efficiency by comparing $E(G)$ (efficiency of the original network) to $E(G_{mod})$ (efficiency after node removal and attempted replacement).
\end{enumerate}

\noindent We vary two parameters in these experiments: the maximum distance of a replacement node from the removed node as $k$ and minimum similarity between a replacement node and the removed node as $t(x_i,x_j)$. By testing different values for $k$ and $t(x_i,x_j)$, we investigate how UniqueRank performs under differing criteria for node replacement.

\subsection{Datasets} \label{sec:datasets}

We evaluate our approach on real terrorist, social, and supply chain networks, listed in Table~\ref{tab:datasets}. Each network is connected and contains node attributes, which may be binary or real-valued; categorical attributes are one-hot encoded. The terrorist networks consist of the undirected Provisional Irish Republican Army network and the Russian Trolls Twitter Mention Network \cite{linvill2020troll}. The supply chain networks \cite{williams2018real} include supply chains for airplane parts, toys, and household items. 
We also evaluate our approach on Facebook social networks \cite{mcauley2012learning}, pre-processed by \cite{yang2020scaling}, in which user attributes are anonymized to protect privacy. These networks have been widely used in various literature \cite{meng2019coembedding,Yang2013community,zhang2018anrl}.

\subsection{Baseline methods}
We compare our method with both traditional and more recent node-importance ranking methods on real networks. The traditional methods include centrality metrics of degree, closeness, and eigenvector, which have been used for decades and are still common in social science literature. We also compare our method to AttriRank, an extension of PageRank that considers external node attributes and is widely used.

Because our method incorporates attribute \emph{uniqueness} to select top nodes, and we could not find existing approaches that do this, we introduce a self-constructed naive method motivated by our evaluation strategy in \S\ref{sec:evaluation}. In this naive method, nodes are first ranked by structural importance. Among these, we select the top $k$ nodes whose minimum distance to any other node with a similarity above a designated threshold (ranging from $0.7$ to $1.0$) exceeds two hops. Thus, for each selected node $i$, no node $j$ within a two-step neighborhood has an attribute similarity above the threshold. As the threshold varies, so does the final selection of top nodes. We evaluate this baseline for each threshold in terms of both the distance to potential replacement nodes (\S\ref{sec:distance}) and the efficiency reduction achieved upon their removal and attempted replacement (\S\ref{sec:eff-reduc}).

\subsection{Investigating distance of potential replacement nodes} \label{sec:distance}
We first analyze the distance between removed nodes, selected according to various ranking methods, and candidate replacement nodes with sufficiently similar attributes (i.e., nodes that meet or exceed a specified RBF similarity threshold). For each threshold $t(x_i,x_j)\in{0.5,0.6,0.7,0.8}$, and for the top $k\in{3,5,10}$ ranked nodes, we observe that nodes identified by UniqueRank generally exhibit a greater average distance to the nearest acceptable replacement node than those chosen by AttriRank or by traditional centrality metrics in both terrorist and social networks. These results, reported in Table~\ref{tab:avg_dist}, show that UniqueRank tends to identify critical nodes that are not only important structurally but are also harder to substitute based on attributes. We exclude supply chain networks from this analysis because their maximum path distance is significantly smaller than that of the other networks.

\vspace{0.3em}

\noindent \textbf{Investigation of distance for brute force baseline method.} Tables~\ref{tab:avg_dist-baselinepira} and \ref{tab:avg_dist-baselinert} present the results for the naive baseline method, which incorporates both node attribute uniqueness and structural importance, evaluated across a range of similarity thresholds. For comparison, we include the corresponding distances obtained by UniqueRank for multiple datasets; additional results are provided in Appendix~\ref{app:dist}. Overall, for the PIRA terrorist network, nodes selected by UniqueRank consistently exhibit a greater average distance to their nearest sufficiently similar replacement node than those selected by the baseline method, across all similarity thresholds from $0.7$ to $1.0$. However, this trend is less consistent in the Russian Troll Tweets network, where the results are more mixed.
\begin{table}[]
\begin{tabular}{cllllll}
\toprule
\multicolumn{1}{l}{}                 &               & PIRA         & \makecell{Russian \\Troll \\Tweets} & \makecell{fb-\\107}      & \makecell{fb-\\348}       & \makecell{fb-\\3437}     \\ \midrule
\multirow{5}{*}{\makecell{threshold \\0.7, top 5}}  & \rule{0pt}{3ex} degree        & 3.6          & 4.8                  & 4.6         & 4.6          & 2.8         \\  
                                       & \rule{0pt}{3ex} closeness     & 4            & 3                    & 4.8         & 2.8          & 1           \\ 
                                       & \rule{0pt}{3ex} eigenvector   & 1.2          & 1                    & 6.4         & 2.8          & 2.8         \\ 
                                       & \rule{0pt}{3ex} AttriRank     & 2            & 4.6                  & 4.6         & 2.8          & 2.8         \\ 
                                       & \rule{0pt}{3ex} UniqueRank & \textbf{6.6} & \textbf{6.4}         & \textbf{10} & \textbf{10}  & \textbf{10} \\ \midrule
\multirow{5}{*}{\makecell{threshold \\0.5, top 5}}  & \rule{0pt}{3ex} degree        & 1.8          & 4.6                  & 4.6         & 3            & 2.8         \\ 
                                       & \rule{0pt}{3ex} closeness     & 1.6          & 3                    & 3.4         & 2.8          & 1           \\ 
                                       & \rule{0pt}{3ex} eigenvector   & 1            & 1                    & 6.4         & 2.8          & 2.8         \\ 
                                       & \rule{0pt}{3ex} AttriRank     & 1            & 4.6                  & 4.6         & 1.2          & 2.8         \\ 
                                       & \rule{0pt}{3ex} UniqueRank & \textbf{4.2} & \textbf{6.4}         & \textbf{10} & \textbf{10}  & \textbf{10} \\ \midrule
\multirow{5}{*}{\makecell{threshold \\0.7, top 10}} & \rule{0pt}{3ex} degree        & 3.5          & 4.7                  & 4.7         & 4.6          & 1.9         \\ 
                                       & \rule{0pt}{3ex} closeness     & 3.9          & 3.9                  & 5.6         & 3.7          & 1.9         \\ 
                                       & \rule{0pt}{3ex} eigenvector   & 1.5          & 1                    & 5.5         & 2.8          & 1.9         \\ 
                                       & \rule{0pt}{3ex} AttriRank     & 1.7          & \textbf{4.8}                  & 4.7         & 3.7          & 2.8         \\  
                                       & \rule{0pt}{3ex} UniqueRank & \textbf{5.2} & 4.7         & \textbf{10} & \textbf{9.1} & \textbf{10} \\ \midrule
\multirow{5}{*}{\makecell{threshold \\0.5, top 10}} & \rule{0pt}{3ex} degree        & 2.3          & 3.7                  & 3.9         & 2.9          & 1.9         \\ 
                                       & \rule{0pt}{3ex} closeness     & 2.3          & 2.9                  & 4.9         & 2.9          & 1.2         \\ 
                                       & \rule{0pt}{3ex} eigenvector   & 1.1          & 1                    & 4.8         & 2            & 1.9         \\ 
                                       & \rule{0pt}{3ex} AttriRank     & 1.2          & 3.8                  & 4.6         & 2            & 2.1         \\ 
                                       & \rule{0pt}{3ex} UniqueRank & \textbf{2.7} & \textbf{4.7}         & \textbf{10} & \textbf{9.1} & \textbf{10} \\ \hline
\end{tabular}
\caption{Average distance between a removed node and a similar node passing a similarity threshold of $\{0.5,0.7\}$ over the top $\{5,10\}$ nodes selected by UniqueRank, AttriRank, and traditional centrality methods.}
\label{tab:avg_dist}
\end{table}

\begin{table}
\resizebox{\columnwidth}{!}{%
\begin{tabular}{cc|c|cccccc}
\toprule \multirow[t]{2}{*}{} & &  & \multicolumn{6}{|c}{Threshold for baseline} \\
\hline & & \makecell{Unique\\-Rank} & 0.95 & 0.9 & 0.85 & 0.8 & 0.75 & 0.7 \\
\midrule \multirow{6}{*}{\makecell{Threshold for \\efficiency \\reduction}} & 0.95 & 8 & 6.2 & 6.2 & 6.2 & 6.6 & 6.2 & 6.2 \\
 & 0.9 & 8 & 6.2 & 6.2 & 6.2 & 6.6 & 6.2 & 6.2 \\
 & 0.85 & 8 & 5.8 & 5.8 & 5.8 & 6.2 & 5.8 & 5.8 \\
 & 0.8 & 8 & 4.2 & 4.2 & 4.2 & 5 & 5 & 5 \\
& 0.75 & 6.6 & 4.2 & 4.2 & 4.2 & 4.8 & 5 & 5 \\
 & 0.7 & 6.6 & 4.2 & 4.2 & 4.2 & 4.8 & 5 & 5 \\
\bottomrule
\end{tabular}}
\caption{For PIRA, the average distance between a removed node and a similar node passing a similarity threshold in the range $\{0.7,...,1\}$ over the top $5$ nodes selected by UniqueRank and the brute force baseline method, across thresholds in the range $\{0.7,...,1\}$.}
\label{tab:avg_dist-baselinepira}
\end{table}

\begin{table}
\resizebox{\columnwidth}{!}{%
\begin{tabular}{cc|c|cccccc}
\toprule \multirow[t]{2}{*}{} & &  & \multicolumn{6}{|c}{Threshold for baseline} \\
\hline & & \makecell{Unique\\-Rank} & 0.95 & 0.9 & 0.85 & 0.8 & 0.75 & 0.7 \\
\midrule \multirow{6}{*}{\makecell{Threshold for \\efficiency \\reduction}} & 0.95 & 10 & 10 & 10 & 10 & 10 & 10 & 10 \\
& 0.9 & 6.6 & 6.6 & 10 & 10 & 10 & 10 & 10 \\
& 0.85 & 6.6 & 6.6 & 8.4 & 10 & 10 & 10 & 10 \\
& 0.8 & 6.4 & 4.6 & 6.6 & 8.2 & 10 & 10 & 10 \\
& 0.75 & 6.4 & 4.6 & 6.6 & 8.2 & 10 & 10 & 10 \\
& 0.7 & 6.4 & 4.6 & 5 & 6.6 & 8.4 & 8.4 & 10 \\
\bottomrule
\end{tabular}}
\caption{For Russian Troll Tweets, the average distance between a removed node and a similar node passing a similarity threshold in the range $\{0.7,...,1\}$ over the top $5$ nodes selected by UniqueRank and the brute force baseline method, across thresholds in the range $\{0.7,...,1\}$.}
\label{tab:avg_dist-baselinert}
\end{table}

\subsection{Investigating efficiency reduction} \label{sec:eff-reduc} 
We evaluate efficiency reduction for removal and attemped replacement of nodes identified by UniqueRank, AttriRank, and traditional centrality-based metrics over multiple parameter settings: $k\in{3,5,10}$ (number of top-ranked nodes) and similarity thresholds $t(x_i, x_j)\in{0.5,0.6,0.7,0.8}$. These comparisons are carried out across terrorist networks, supply chain networks, and Facebook social network components.

\begin{table*}[t]
\resizebox{\textwidth}{!}{%
\begin{tabular}{ccccccccccccccc}
\hline
\multicolumn{1}{l}{} &
   &
  \rule{0pt}{4ex} PIRA &
  \makecell{Russian \\Troll \\Tweets} &
  \makecell{fb\\-107} &
  \makecell{fb\\-348} &
  \makecell{fb\\-3437} &
  \makecell{sc-5} &
  \makecell{sc-7} &
  \makecell{sc-15} &
  \makecell{sc-20} &
  \makecell{sc-27} &
  \makecell{sc-28} &
  \makecell{sc-32} &
  \makecell{sc-37} \\ \midrule
\multirow{5}{*}{\makecell{{\scriptsize threshold} \\0.7, {\scriptsize top} 5}} &
  \rule{0pt}{3ex} {\scriptsize degree} &
  0.1997 &
  0.1546 &
  0.0044 &
  0.0127 &
  0.0069 &
  0.1273 &
  0.1325 &
  0.0225 &
  0.1955 &
  \textbf{0.104} &
  0.0011 &
  0.0683 &
  0.013 \\ 
 &
  \rule{0pt}{3ex} {\scriptsize closeness} &
  0.1380 &
  0.1008 &
  0.0030 &
  0.0100 &
  0.0043 &
  0.1273 &
  0.1325 &
  0.0234 &
  0.2791 &
  \textbf{0.104} &
  0.0011 &
  0.0463 &
  0.0414 \\ 
 &
  \rule{0pt}{3ex} {\scriptsize eigenvector} &
  0.0406 &
  0.0519 &
  0.0057 &
  0.0113 &
  0.0069 &
  0 &
  0 &
  0 &
  0 &
  0 &
  0 &
  0 &
  0 \\ 
 &
  \rule{0pt}{3ex} {\scriptsize AttriRank} &
  0.0687 &
  0.1588 &
  0.0043 &
  0.0108 &
  0.0069 &
  \textbf{0.2013} &
  0.1325 &
  \textbf{0.6} &
  0.42 &
  0.0606 &
  0.0011 &
  0.0698 &
  0.5181 \\ 
 &
  \rule{0pt}{3ex} {\scriptsize UniqueRank} &
  \textbf{0.2638} &
  \textbf{0.2107} &
  \textbf{0.0125} &
  \textbf{0.0131} &
  \textbf{0.0163} &
  \textbf{0.2013} &
  \textbf{0.2301} &
  \textbf{0.6} &
  \textbf{0.4536} &
  0.103 &
  \textbf{0.0506} &
  \textbf{0.0819} &
  \textbf{0.5387} \\ \midrule
\multirow{5}{*}{\makecell{{\scriptsize threshold} \\0.5, {\scriptsize top} 5}} &
  \rule{0pt}{3ex} {\scriptsize degree} &
  0.1002 &
  0.1403 &
  0.0044 &
  0.0107 &
  0.0069 &
  0.1273 &
  0.1325 &
  0.0225 &
  0.1955 &
  \textbf{0.104} &
  0.0011 &
  0.0683 &
  0.013 \\ 
 &
  \rule{0pt}{3ex} {\scriptsize closeness} &
  0.0789 &
  0.1008 &
  0.0030 &
  0.0100 &
  0.0043 &
  0.1273 &
  0.1325 &
  0.0234 &
  0.2791 &
  \textbf{0.104} &
  0.0011 &
  0.0463 &
  0.0414 \\ 
 &
  \rule{0pt}{3ex} {\scriptsize eigenvector} &
  0.0420 &
  0.0519 &
  0.0057 &
  0.0113 &
  0.0069 &
  0 &
  0 &
  0 &
  0 &
  0 &
  0 &
  0 &
  0 \\ 
 &
  \rule{0pt}{3ex} {\scriptsize AttriRank} &
  0.0691 &
  0.1588 &
  0.0043 &
  0.0088 &
  0.0069 &
  \textbf{0.2013} &
  0.1325 &
  \textbf{0.6} &
  0.42 &
  0.0613 &
  0.0011 &
  0.0698 &
  0.5181 \\ 
 &
  \rule{0pt}{3ex} {\scriptsize UniqueRank} &
  \textbf{0.2414} &
  \textbf{0.2107} &
  \textbf{0.0125} &
  \textbf{0.0131} &
  \textbf{0.0163} &
  \textbf{0.2013} &
  \textbf{0.2301} &
  \textbf{0.6} &
  \textbf{0.4536} &
  0.1037 &
  \textbf{0.0506} &
  \textbf{0.0819} &
  \textbf{0.5387} \\ \midrule
\multirow{5}{*}{\makecell{{\scriptsize threshold} \\0.7, {\scriptsize top} 10}} &
  \rule{0pt}{3ex} {\scriptsize degree} &
  0.1349 &
  \textbf{0.1383} &
  0.0048 &
  0.0112 &
  0.0081 &
  0.2027 &
  0.1068 &
  0.0255 &
  0.1125 &
  0.0445 &
  0.0141 &
  0.0693 &
  0.015 \\ 
 &
  \rule{0pt}{3ex} {\scriptsize closeness} &
  0.1316 &
  0.0972 &
  0.0047 &
  0.0104 &
  0.0082 &
  0.2136 &
  0.2299 &
  0.0262 &
  0.3031 &
  0.0577 &
  0.0352 &
  0.0468 &
  0.0377 \\ 
 &
  \rule{0pt}{3ex} {\scriptsize eigenvector} &
  0.0418 &
  0.0511 &
  0.0057 &
  0.0109 &
  0.0076 &
  0 &
  0 &
  0 &
  0 &
  0 &
  0 &
  0 &
  0 \\ 
 &
  \rule{0pt}{3ex} {\scriptsize AttriRank} &
  0.0932 &
  0.1209 &
  0.0045 &
  0.0271 &
  0.0098 &
  \textbf{0.2327} &
  0.2538 &
  \textbf{0.6} &
  0.3031 &
  0.0925 &
  0.0532 &
  0.0672 &
  \textbf{0.5488} \\ 
 &
  \rule{0pt}{3ex} {\scriptsize UniqueRank} &
  \textbf{0.1922} &
  0.1351 &
  \textbf{0.0099} &
  \textbf{0.0979} &
  \textbf{0.0177} &
  \textbf{0.2327} &
  \textbf{0.2791} &
  \textbf{0.6} &
  \textbf{0.3097} &
  \textbf{0.1275} &
  \textbf{0.2073} &
  \textbf{0.0821} &
  \textbf{0.5488} \\ \midrule
\multirow{5}{*}{\makecell{{\scriptsize threshold} \\0.5, {\scriptsize top} 10}} &
  \rule{0pt}{3ex} {\scriptsize degree} &
  0.0820 &
  0.1105 &
  0.0049 &
  0.0100 &
  0.0081 &
  0.1877 &
  0.1008 &
  0.0255 &
  0.1125 &
  0.0441 &
  0.0141 &
  0.0331 &
  0.015 \\ 
 &
  \rule{0pt}{3ex} {\scriptsize closeness} &
  0.0911 &
  0.0951 &
  0.0047 &
  0.0094 &
  0.0082 &
  0.2136 &
  0.2299 &
  0.0262 &
  0.3031 &
  0.0581 &
  0.0352 &
  0.0468 &
  0.0377 \\ 
 &
  \rule{0pt}{3ex} {\scriptsize eigenvector} &
  0.0421 &
  0.0511 &
  0.0057 &
  0.0099 &
  0.0076 &
  0 &
  0 &
  0 &
  0 &
  0 &
  0 &
  0 &
  0 \\ 
 &
  \rule{0pt}{3ex} {\scriptsize AttriRank} &
  0.0934 &
  0.1128 &
  0.0047 &
  0.0127 &
  0.0098 &
  \textbf{0.2327} &
  0.2538 &
  \textbf{0.6} &
  0.3031 &
  0.0929 &
  0.0532 &
  0.0672 &
  \textbf{0.5488} \\ 
 &
  \rule{0pt}{3ex} {\scriptsize UniqueRank} &
  \textbf{0.1582} &
  \textbf{0.1351} &
  \textbf{0.0099} &
  \textbf{0.0979} &
  \textbf{0.0177} &
  \textbf{0.2327} &
  \textbf{0.2791} &
  \textbf{0.6} &
  \textbf{0.3097} &
  \textbf{0.1279} &
  \textbf{0.2073} &
  \textbf{0.0821} &
  \textbf{0.5488} \\ \hline
\end{tabular}}
\caption{Efficiency reduction over similarity threshold for replacement in $\{0.5,0.7\}$ and averaging over the top $\{5,10\}$ nodes selected by UniqueRank, AttriRank, or centrality metrics. The greatest efficiency reduction among the five methods in each setting is in bold.}
\label{tab:eff-reduc}
\end{table*}

As shown in Table~\ref{tab:eff-reduc}, removing and attempting to replace a top-ranked node identified by UniqueRank consistently results in a greater reduction in network efficiency compared to nodes selected by AttriRank or traditional centrality metrics. This effect is more significant in terrorist and supply chain networks. In social networks, however, the magnitude of efficiency reduction---while still present---is smaller. This may reflect the limited practical relevance of node replacement in social networks, where substituting a “friend” is both less meaningful and often ambiguous in real-world contexts.

\vspace{0.4em}
\noindent \textbf{Investigation of efficiency reduction for brute force baseline method.} Tables~\ref{tab:effreducpira} and \ref{tab:effreducrt} present the average efficiency reduction achieved by the naive baseline method---which incorporates attribute uniqueness as well as structural importance---across various similarity thresholds, along with the corresponding results for UniqueRank (additional results are in Appendix \ref{app:effreduc}). Overall, UniqueRank outperforms the naive baseline in certain datasets, such as the PIRA terrorist network, across most threshold settings. However, in other datasets, including the Russian Troll Tweets network, the relative performance is more varied. We discuss the implications of these findings further in Section~\S\ref{sec:socexptakeaways}.

\begin{table*}
\resizebox{\textwidth}{!}{%
\begin{tabular}{cc|c|cccccc}
\toprule \multirow[t]{2}{*}{} & &  & \multicolumn{6}{|c}{Threshold for baseline} \\
\hline & & \makecell{UniqueRank} & 0.95 & 0.9 & 0.85 & 0.8 & 0.75 & 0.7 \\
\midrule \multirow{6}{*}{\makecell{Threshold for \\efficiency \\reduction}} & 0.95 & $0.2638\pm 0.0497$ & $0.1764\pm 0.0325$ & $0.1764\pm 0.0325$ & $0.1764\pm 0.0325$ & $0.3533\pm 0.0839$ & $0.2748\pm 0.0842$ & $0.2748\pm 0.0842$ \\
& 0.9 & $0.2638\pm 0.0497$ & $0.1764\pm 0.0325$ & $0.1764\pm 0.0325$ & $0.1764\pm 0.0325$ & $0.3533\pm 0.0839$ & $0.2748\pm 0.0842$ & $0.2748\pm 0.0842$ \\
& 0.85 & $0.2638\pm 0.0497$ & $0.1764\pm 0.0325$ & $0.1764\pm 0.0325$ & $0.1764\pm 0.0325$ & $0.3533\pm 0.0839$ & $0.2748\pm 0.0842$ & $0.2748\pm 0.0842$ \\
& 0.8 & $0.2638\pm 0.0497$ & $0.1725\pm 0.0342$ & $0.1725\pm 0.0342$ & $0.1725\pm 0.0342$ & $0.3533\pm 0.0839$ & $0.2748\pm 0.0842$ & $0.2748\pm 0.0842$ \\
& 0.75 & $0.2638\pm 0.0497$ & $0.1725\pm 0.0342$ & $0.1725\pm 0.0342$ & $0.1725\pm 0.0342$ & $0.2608\pm 0.0917$ & $0.2748\pm 0.0842$ & $0.2748\pm 0.0842$ \\
& 0.7 & $0.2638\pm 0.0497$ & $0.1725\pm 0.0342$ & $0.1725\pm 0.0342$ & $0.1725\pm 0.0342$ & $0.2608\pm 0.0917$ & $0.2748\pm 0.0842$ & $0.2748\pm 0.0842$ \\
\bottomrule
\end{tabular}}
\caption{For PIRA, efficiency reduction over similarity threshold for replacement in the range $\{0.7,...,1\}$ and averaging over the top $5$ nodes selected by UniqueRank and the brute force baseline method, across thresholds in the range $\{0.7,...,1\}$.}
\label{tab:effreducpira}
\end{table*}

\begin{table*}
\resizebox{\textwidth}{!}{%
\begin{tabular}{cc|c|cccccc}
\toprule \multirow[t]{2}{*}{} & &  & \multicolumn{6}{|c}{Threshold for baseline} \\
\hline & & \makecell{UniqueRank} & 0.95 & 0.9 & 0.85 & 0.8 & 0.75 & 0.7 \\
\midrule \multirow{6}{*}{\makecell{Threshold for \\efficiency \\reduction}}  & 0.95 & $0.2783\pm 0.0509$ & $0.3023\pm 0.0413$ & $0.2785\pm 0.0622$ & $0.2847\pm 0.0619$ & $0.2258\pm 0.0752$ & $0.2258\pm 0.0752$ & $0.2259\pm 0.0751$ \\
& 0.9 & $0.207\pm 0.0864$ & $0.2227\pm 0.0867$ & $0.2785\pm 0.0622$ & $0.2847\pm 0.0619$ & $0.2258\pm 0.0752$ & $0.2258\pm 0.0752$ & $0.2259\pm 0.0751$ \\
& 0.85 & $0.207\pm 0.0864$ & $0.2227\pm 0.0867$ & $0.2328\pm 0.0806$ & $0.2847\pm 0.0619$ & $0.2258\pm 0.0752$ & $0.2258\pm 0.0752$ & $0.2259\pm 0.0751$ \\
& 0.8 & $0.2107\pm 0.0842$ & $0.1588\pm 0.0888$ & $0.1741\pm 0.0822$ & $0.226\pm 0.0750$ & $0.2258\pm 0.0752$ & $0.2258\pm 0.0752$ & $0.2259\pm 0.0751$ \\
& 0.75 & $0.2107\pm 0.0842$ & $0.1588\pm 0.0888$ & $0.1741\pm 0.0822$ & $0.226\pm 0.0750$ & $0.2258\pm 0.0752$ & $0.2258\pm 0.0752$ & $0.2259\pm 0.0751$ \\
& 0.7 & $0.2107\pm 0.0842$ & $0.1588\pm 0.0888$ & $0.1715\pm 0.0832$ & $0.2235\pm 0.0765$ & $0.2232\pm 0.0767$ & $0.2232\pm 0.0767$ & $0.2259\pm 0.0751$ \\
\bottomrule
\end{tabular}}
\caption{For Russian Troll Tweets, efficiency reduction over similarity threshold for replacement in the range $\{0.7,...,1\}$ and averaging over the top $5$ nodes selected by UniqueRank and the brute force baseline method, across thresholds in the range $\{0.7,...,1\}$.}
\label{tab:effreducrt}
\end{table*}

\subsection{Sensitivity to the $\alpha$ Parameter}\label{sec:exp-hyperparameters}

We investigate how varying the trade-off parameter $\alpha$ in UniqueRank affects the selection of top nodes. The parameter $\alpha$ controls the balance between structural importance and attribute uniqueness in the ranking process. As shown in Figure~\ref{fig:vis-results2} for the Provisional Irish Republican Army terrorist network and Figure~\ref{fig:vis-results3} for the Russian Troll Tweets network, decreasing the value of $\alpha$ increases the weight given to attribute uniqueness, resulting in the selection of nodes that are less structurally central but more difficult to replace based on their attributes.
\begin{figure}[H]
    \includegraphics[scale=.28]{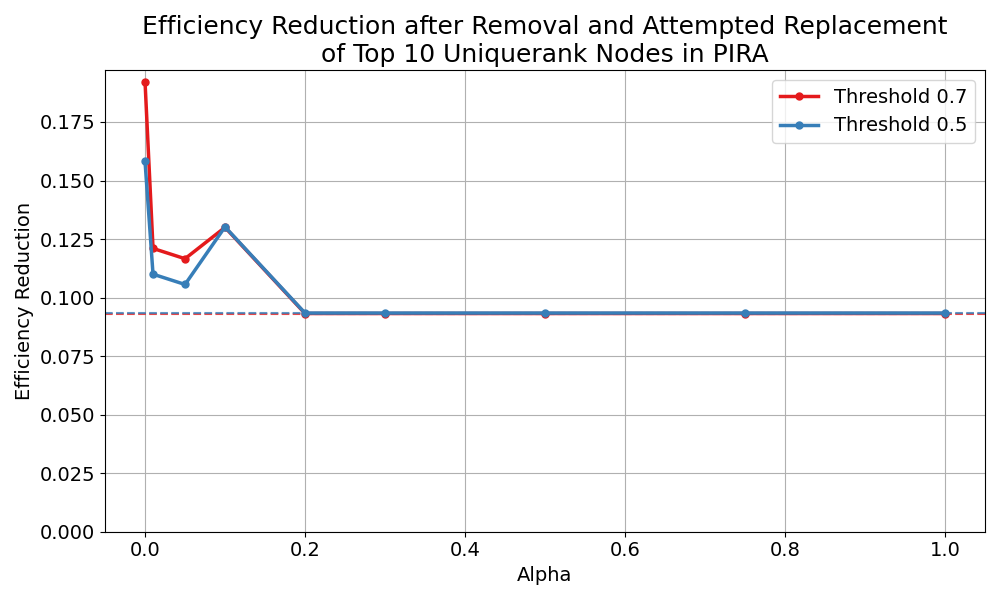}
    \caption{Efficiency reduction after removal and attempted replacement under similarity thresholds of $0.5, 0.7$ for the top 10 UniqueRank nodes in PIRA.}
    \label{fig:vis-results2}
\end{figure}
\begin{figure}[H]
    \includegraphics[scale=.28]{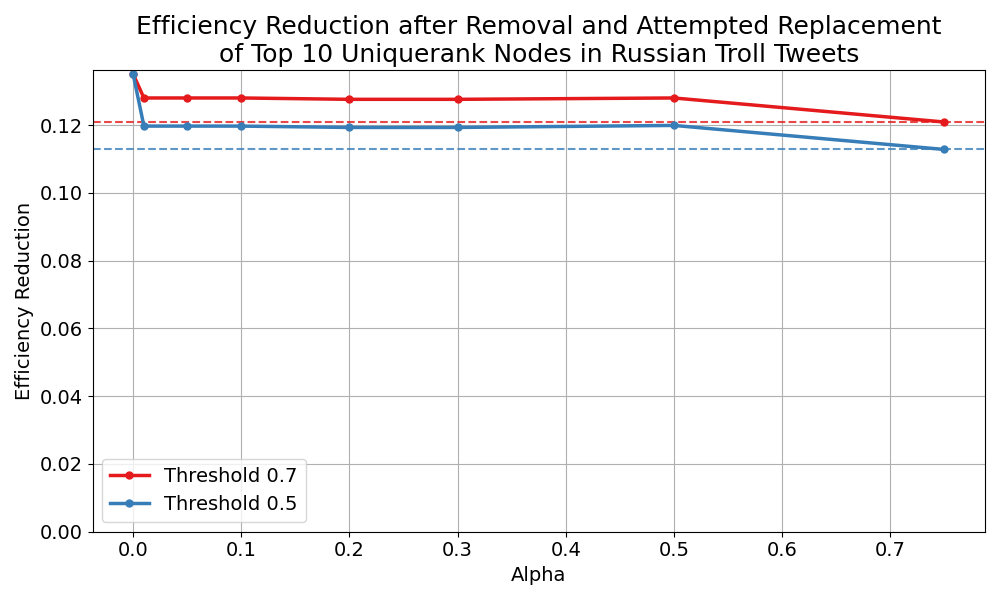}
    \caption{Efficiency reduction after removal and attempted replacement under similarity thresholds of $0.5, 0.7$ for the top 10 UniqueRank nodes in Russian Troll Tweets.}
    \label{fig:vis-results3}
\end{figure}

\subsection{Takeaway from social network experiments}
\label{sec:socexptakeaways}

UniqueRank performs well compared to both traditional centrality metrics---such as degree, closeness, and eigenvector---that have long been standard in the social sciences, and AttriRank, which is an extension of widely used PageRank that incorporates node attributes. While UniqueRank's advantage over the naive brute force method is less consistent, it is important to note that the brute force approach requires manual threshold tuning and is significantly less computationally efficient: performing $O(|V|^2)$ similarity comparisons for each threshold, and taking $O(|V|^3)$ time for efficiency reduction via all-pairs shortest path computations. Further, these steps must be repeated for every candidate node and across multiple thresholds, leading to significant computational overhead. On the other hand, UniqueRank is more efficient (see \S\ref{sec:computational-complexity}), achieves substantial raw efficiency reductions, and consistently performs well compared to standard baselines (Table~\ref{tab:eff-reduc})---all without extensive parameter tuning. Therefore, we recommend UniqueRank for this task, while noting that performance may vary by dataset characteristics.

Our evaluation strategy, which focuses on the removal and attempted replacement of individual nodes, aligns well with applications such as terrorist or criminal networks, where local disruptions in the network structure can have significant consequences. However, alternative approaches, such as assessing changes in global efficiency or analyzing the effects of removing multiple nodes, may be more suitable for other real-world scenarios. We propose UniqueRank as an initial step toward highlighting the practical importance of identifying nodes that are both structurally significant and challenging to replace based on external attributes. The random walk step---consistent with PageRank and AttriRank---and the subsequent refinement step together provide an efficient approach that consistently identifies nodes whose removal results in greater efficiency reduction than traditional importance metrics. Future work could explore alternative methods of identifying such nodes and more tailored evaluation strategies to specialized domains where node removal and network adaptation are common.

\section{Application to biomolecules} \label{sec:biomolecule} 
While UniqueRank is motivated by the observation that real-world networks may adaptively replace removed nodes with similar and nearby alternatives, its applicability extends to domains where physical node replacement is not possible, such as biomolecular structures. In these contexts, ranking node importance remains highly relevant~\cite{wang2021statistical}.  For example, Kumar et al. \cite{Kumar2023-ranking} proposed a method to identify genes in a network that are more relevant for plant-pathogen interactions, while other studies \cite{zhao2014detecting,Rajeh2022-ranking} showed that critical nodes in biological networks may help identify promising treatment targets \cite{Wang_Wang_Zheng_2022mini}.

\vspace{0.4em}
\noindent \textbf{Dataset.} MD17 \cite{chmiela2017machine} (molecular dynamics) is a dataset that consists of organic molecules, where each node represents an atom, each edge represents a covalent bond, and each node attribute stores an atom's coordinates ($x$, $y$, and $z$) and the forces acting on it in these three dimensions. At 0 Kelvins (K), the net force on each molecule is zero when summing over the three coordinate axes. These force values are used in molecular dynamics simulations that raise the temperature above 0 K, adding potential energy that is then converted to kinetic energy, causing atomic movement. Because force is the negative derivative of potential energy (e.g., $F_x = -\frac{dU}{dx}$), larger forces correspond to greater movement of the atom in its $x$, $y$, and $z$ positions.

\vspace{0.4em}

\noindent\textbf{Findings.} We applied UniqueRank to several organic molecules, including aspirin, benzene, naphthalene, and toluene. We observed that the top nodes identified by UniqueRank had positional attribute values similar to those selected by AttriRank; however, their force attributes tended to be significantly more extreme than those of the AttriRank-selected nodes.

Figures \ref{fig:toluene} and \ref{fig:naphthalene} illustrate these results for the top 100 and 500 nodes selected by both methods on toluene, aspirin, and naphthalene. Each row features a histogram for a single node attribute: purple indicates the distribution of attribute values over all nodes, blue indicates the distribution of attribute values over nodes selected by AttriRank, and red indicates the distribution of attribute values over nodes selected by UniqueRank.

Discussions with materials science experts indicate that the atoms with the most extreme force values, as selected by UniqueRank, tend to reside in unique chemical environments and exhibit distinct chemical properties. These results suggest that UniqueRank is effective at isolating atoms that may play especially interesting or critical roles in subsequent molecular dynamics simulations.
\begin{figure}[]
    \includegraphics[scale=.5]{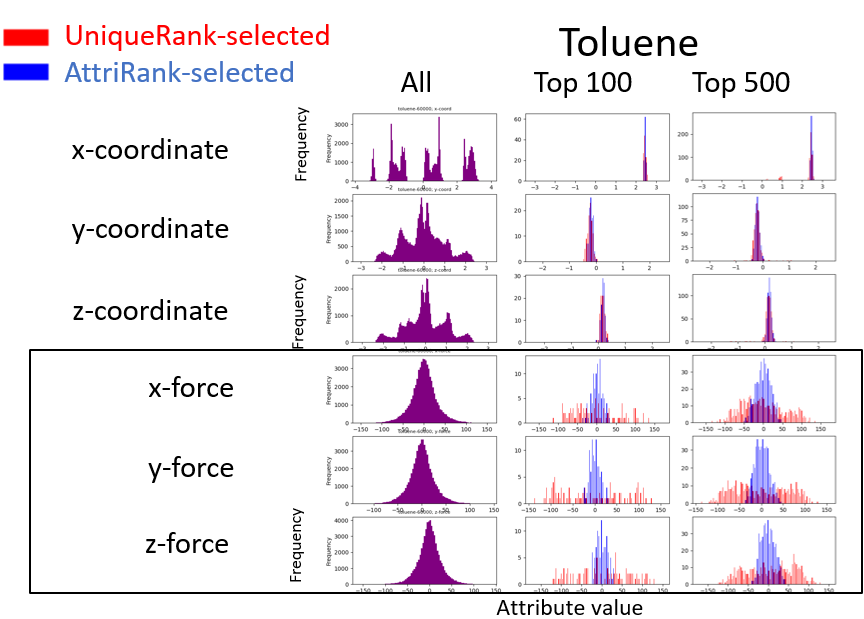}
    \caption{Distribution of attribute values on nodes that UniqueRank selects for toluene.}
    \label{fig:toluene}
\end{figure}

\begin{figure}[]
    \includegraphics[scale=.51]{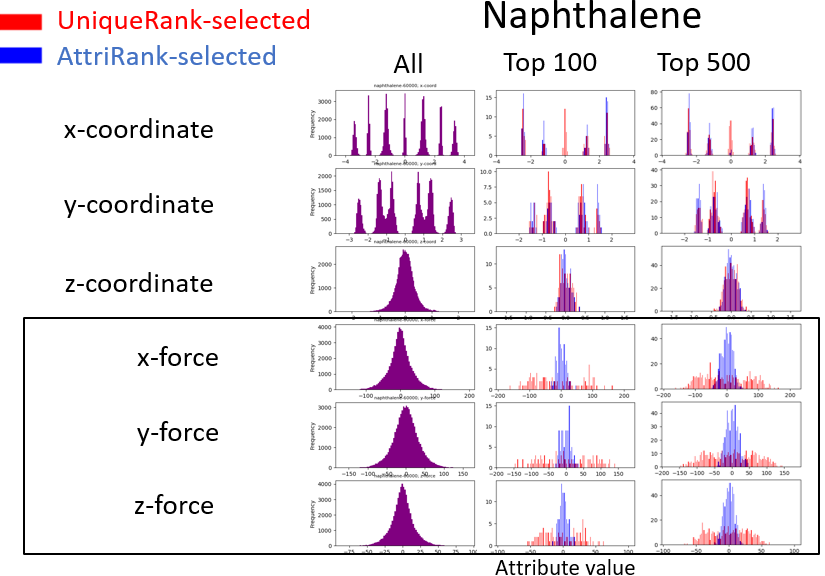}
    \caption{Distribution of attribute values on nodes that UniqueRank selects for naphthalene.}
    \label{fig:naphthalene}
\end{figure}

\section{Conclusion} \label{sec:conclusion} 

While identifying structurally important nodes whose removal can greatly disrupt network efficiency is a well-established problem, the related challenge of finding structurally important nodes with \textit{unique attributes}---whose removal and \textit{attempted replacement} can significantly affect efficiency---has received little attention. Nevertheless, attributed graphs are common in domains ranging from social networks and terrorist organizations to supply chains \cite{linvill2020troll,mcauley2012learning,williams2018real}, and real-world networks often aim to replace removed nodes with similar alternatives \cite{missaoui2013social}. Motivated by these practical considerations, we introduce \textbf{UniqueRank}, a Markov Chain-based method designed to identify nodes that both are structurally important and have unique attributes in their local neighborhood. We evaluate UniqueRank by measuring the reduction in network efficiency after removal and attempted replacement of its top-ranked nodes, and observe that it often corresponds to a greater efficiency loss compared to both advanced and traditional node importance ranking methods across a variety of real-world social, terrorist, and supply chain networks. Additionally, we apply UniqueRank to biomolecular structures, uncovering further insights into node uniqueness and its potential functional significance.

% if have a single appendix:
%\appendix[Proof of the Zonklar Equations]
% or
%\appendix  % for no appendix heading
% do not use \section anymore after \appendix, only \section*
% is possibly needed

\appendix
\section*{Subgraph Size and Efficiency: Pre- and Post-Modification Analysis}
\label{app:subgraphsizeplot}
Plots comparing the size of the 2-hop neighborhood of the removed node and network efficiency before and after node removal or modification, illustrated for the PIRA and Russian Trolls datasets. Results are shown for both UniqueRank and baseline methods at various thresholds.

\begin{figure}[]
    \includegraphics[scale=.7]{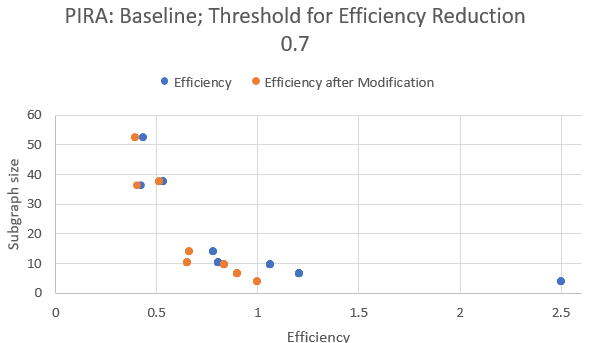}
    \caption{For PIRA, the plot of subgraph size and efficiency before and after node removal when the top 5 nodes are selected by the naive baseline (considering any baseline threshold in the range $\{0.7,...,1\}$), where the similarity threshold for replacement after removing a node is $0.7$. }
    \label{fig:pira-thresh-0.7}
\end{figure}

\begin{figure}[]
    \includegraphics[scale=.7]{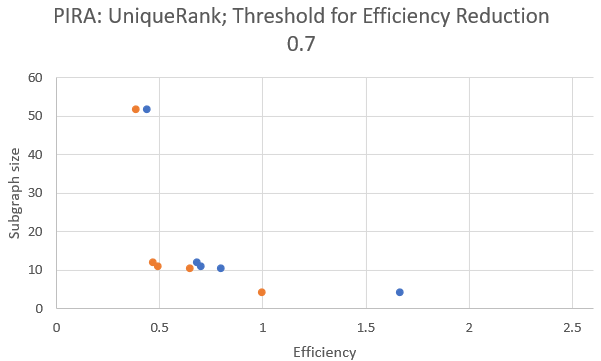}
    \caption{For PIRA, the plot of subgraph size and efficiency before and after node removal when the top 5 nodes are selected by UniqueRank, where the similarity threshold for replacement after removing a node is $0.7$. }
    \label{fig:pira-thresh-0.7-unique}
\end{figure}

\begin{figure}[]
    \includegraphics[scale=.7]{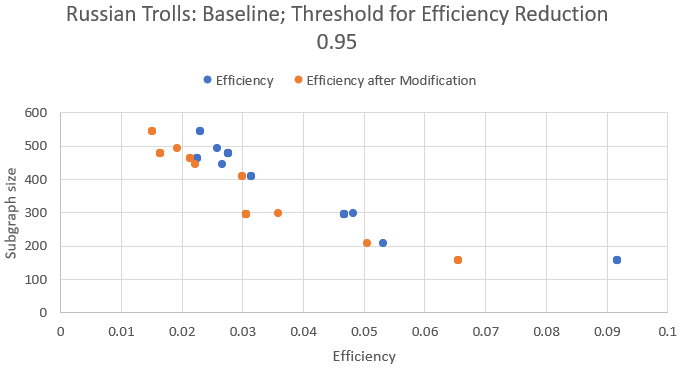}
    \caption{For Russian Troll Tweets, the plot of subgraph size and efficiency before and after node removal when the top 5 nodes are selected by the naive baseline (considering any baseline threshold in the range $\{0.7,...,1\}$), where the similarity threshold for replacement after removing a node is $0.95$. }
    \label{fig:rt-thresh-0.95}
\end{figure}

\begin{figure}[]
    \includegraphics[scale=.7]{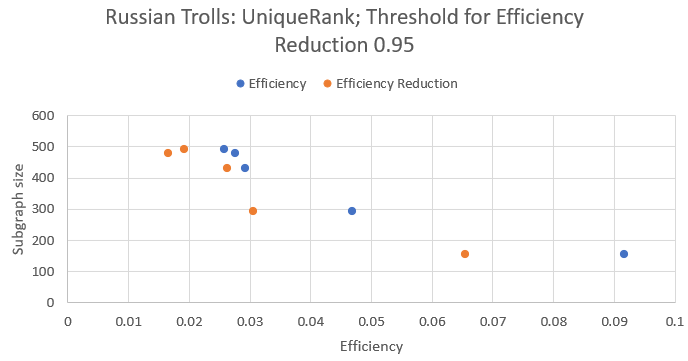}
    \caption{For Russian Troll Tweets, the plot of subgraph size and efficiency before and after node removal when the top 5 nodes are selected by UniqueRank, where the similarity threshold for replacement after removing a node is $0.95$. }
    \label{fig:rt-thresh-0.95-unique}
\end{figure}

\section*{Naive baseline: Average distances between a similar node and a removed node for more datasets}
\label{app:dist}

Additional results showing average distances between a similar node and a removed node for multiple baseline thresholds across more datasets in Tables~\ref{tab:distfb348}, \ref{tab:distsc37}, and \ref{tab:distfb3437}.

\begin{table}
\resizebox{\columnwidth}{!}{%
\begin{tabular}{cc|c|cccccc}
\toprule \multirow[t]{2}{*}{} & &  & \multicolumn{6}{|c}{Threshold for baseline} \\
\hline & & \makecell{Unique\\-Rank} & 0.95 & 0.9 & 0.85 & 0.8 & 0.75 & 0.7 \\
\midrule \multirow{6}{*}{\makecell{Threshold for \\efficiency \\reduction}}
 & 0.95 & 9.1 & 10 & 10 & 10 & 10 & 10 & 10 \\
 & 0.9  & 9.1 & 10 & 10 & 10 & 10 & 10 & 10 \\
 & 0.85 & 9.1 & 9.1 & 9.1 & 10 & 10 & 10 & 10 \\
 & 0.8  & 9.1 & 9.1 & 9.1 & 10 & 10 & 10 & 10 \\
 & 0.75 & 9.1 & 9.1 & 9.1 & 10 & 10 & 10 & 10 \\
 & 0.7  & 9.1 & 8.2 & 8.2 & 9.1 & 9.1 & 9.1 & 10 \\
\bottomrule
\end{tabular}}
\caption{For Facebook-348, the average distance between a removed node and a similar node passing a similarity threshold in the range $\{0.7,...,1\}$ over the top $10$ nodes selected by UniqueRank and the brute force baseline method, across thresholds in the range $\{0.7,...,1\}$.}
\label{tab:distfb348}
\end{table}

\begin{table}
\resizebox{\columnwidth}{!}{%
\begin{tabular}{cc|c|cccccc}
\toprule
\multirow[t]{2}{*}{} & & & \multicolumn{6}{|c}{Threshold for baseline} \\
\hline
& & \makecell{Unique\\-Rank} & 0.95 & 0.9 & 0.85 & 0.8 & 0.75 & 0.7 \\
\midrule
\multirow{6}{*}{\makecell{Threshold for \\ efficiency \\ reduction}}
 & 0.95 & 4.8 & 4.4 & 4.4 & 4.4 & 4.4 & 4.4 & 4.4 \\
 & 0.9  & 4.8 & 4.4 & 4.4 & 4.4 & 4.4 & 4.4 & 4.4 \\
 & 0.85 & 4.8 & 4.4 & 4.4 & 4.4 & 4.4 & 4.4 & 4.4 \\
 & 0.8  & 4.8 & 4.4 & 4.4 & 4.4 & 4.4 & 4.4 & 4.4 \\
 & 0.75 & 4.8 & 4.4 & 4.4 & 4.4 & 4.4 & 4.4 & 4.4 \\
 & 0.7  & 4.8 & 4.4 & 4.4 & 4.4 & 4.4 & 4.4 & 4.4 \\
\bottomrule
\end{tabular}}
\caption{For Supply-Chain/37, the average distance between a removed node and a similar node passing a similarity threshold in the range $\{0.7,\ldots,1\}$ over the top $5$ nodes selected by UniqueRank and the brute force baseline method, across thresholds in the range $\{0.7,\ldots,1\}$.}
\label{tab:distsc37}
\end{table}

\begin{table}
\resizebox{\columnwidth}{!}{%
\begin{tabular}{cc|c|cccccc}
\toprule \multirow[t]{2}{*}{} & &  & \multicolumn{6}{|c}{Threshold for baseline} \\
\hline
& & \makecell{Unique\\-Rank} & 0.95 & 0.9 & 0.85 & 0.8 & 0.75 & 0.7 \\
\midrule
\multirow{6}{*}{\makecell{Threshold for \\efficiency \\reduction}}
 & 0.95 & 10 & 9.3 & 9.3 & 10 & 10 & 10 & 10 \\
 & 0.9  & 10 & 9.3 & 9.3 & 10 & 10 & 10 & 10 \\
 & 0.85 & 10 & 7.6 & 7.6 & 9.3 & 9.3 & 10 & 10 \\
 & 0.8  & 10 & 7.6 & 7.6 & 9.3 & 9.3 & 10 & 10 \\
 & 0.75 & 10 & 7.4 & 7.4 & 9.2 & 9.2 & 10 & 10 \\
 & 0.7  & 10 & 7.4 & 7.4 & 9.2 & 9.2 & 10 & 10 \\
\bottomrule
\end{tabular}}
\caption{For Facebook-3437, the average distance between a removed node and a similar node passing a similarity threshold in the range $\{0.7, \ldots, 1\}$ over the top $10$ nodes selected by UniqueRank and the brute force baseline method, across thresholds in the range $\{0.7,\ldots,1\}$}
\label{tab:distfb3437}
\end{table}

\section*{Naive baseline: Efficiency reduction for more datasets}
\label{app:effreduc}

Additional results showing efficiency reduction for multiple baseline thresholds across more datasets in Tables~\ref{tab:efffb348}, \ref{tab:effsc37}, and \ref{tab:efffb3437}.

\begin{table*}
\resizebox{\textwidth}{!}{%
\begin{tabular}{cc|c|cccccc}
\toprule
\multirow[t]{2}{*}{} & & & \multicolumn{6}{|c}{Threshold for baseline} \\
\cline{3-9}
& & \makecell{UniqueRank \\ (Threshold- \\ independent)} & 0.95 & 0.9 & 0.85 & 0.8 & 0.75 & 0.7 \\
\midrule
\multirow{6}{*}{\makecell{Threshold for \\ efficiency \\ reduction}}
& 0.95 & $0.0979\pm 0.0449$ & $0.0334\pm 0.0157$ & $0.0334\pm 0.0157$ & $0.0331\pm 0.0158$ & $0.0331\pm 0.0158$ & $0.0331\pm 0.0158$ & $0.0319\pm 0.0160$ \\
& 0.9  & $0.0979\pm 0.0449$ & $0.0334\pm 0.0157$ & $0.0334\pm 0.0157$ & $0.0331\pm 0.0158$ & $0.0331\pm 0.0158$ & $0.0331\pm 0.0158$ & $0.0319\pm 0.0160$ \\
& 0.85 & $0.0979\pm 0.0449$ & $0.0333\pm 0.0158$ & $0.0333\pm 0.0158$ & $0.0333\pm 0.0158$ & $0.0333\pm 0.0158$ & $0.0333\pm 0.0158$ & $0.0320\pm 0.0160$ \\
& 0.8  & $0.0979\pm 0.0449$ & $0.0333\pm 0.0158$ & $0.0333\pm 0.0158$ & $0.0333\pm 0.0158$ & $0.0333\pm 0.0158$ & $0.0333\pm 0.0158$ & $0.0320\pm 0.0160$ \\
& 0.75 & $0.0979\pm 0.0449$ & $0.0333\pm 0.0158$ & $0.0333\pm 0.0158$ & $0.0333\pm 0.0158$ & $0.0333\pm 0.0158$ & $0.0333\pm 0.0158$ & $0.0320\pm 0.0160$ \\
& 0.7  & $0.0979\pm 0.0449$ & $0.0338\pm 0.0157$ & $0.0338\pm 0.0157$ & $0.0338\pm 0.0157$ & $0.0338\pm 0.0157$ & $0.0338\pm 0.0157$ & $0.0338\pm 0.0157$ \\
\bottomrule
\end{tabular}}
\caption{For Facebook-348, efficiency reduction over similarity threshold for replacement in the range $\{0.7,...,1\}$ and averaging over the top $10$ nodes selected by UniqueRank and the brute force baseline method, across thresholds in the range $\{0.7,...,1\}$.}
\label{tab:efffb348}
\end{table*}

\begin{table*}
\resizebox{\textwidth}{!}{%
\begin{tabular}{cc|c|cccccc}
\toprule
\multirow[t]{2}{*}{} & & & \multicolumn{6}{|c}{Threshold for baseline} \\
\cline{3-9}
& & \makecell{UniqueRank \\ (Threshold- \\ independent)} & 0.95 & 0.9 & 0.85 & 0.8 & 0.75 & 0.7 \\
\midrule
\multirow{6}{*}{\makecell{Threshold for \\ efficiency \\ reduction}}
& 0.95 & $0.5387\pm 0.0251$ & $0.5181\pm 0.0205$ & $0.5181\pm 0.0205$ & $0.5181\pm 0.0205$ & $0.5181\pm 0.0205$ & $0.5181\pm 0.0205$ & $0.5181\pm 0.0205$ \\
& 0.9  & $0.5387\pm 0.0251$ & $0.5181\pm 0.0205$ & $0.5181\pm 0.0205$ & $0.5181\pm 0.0205$ & $0.5181\pm 0.0205$ & $0.5181\pm 0.0205$ & $0.5181\pm 0.0205$ \\
& 0.85 & $0.5387\pm 0.0251$ & $0.5181\pm 0.0205$ & $0.5181\pm 0.0205$ & $0.5181\pm 0.0205$ & $0.5181\pm 0.0205$ & $0.5181\pm 0.0205$ & $0.5181\pm 0.0205$ \\
& 0.8  & $0.5387\pm 0.0251$ & $0.5181\pm 0.0205$ & $0.5181\pm 0.0205$ & $0.5181\pm 0.0205$ & $0.5181\pm 0.0205$ & $0.5181\pm 0.0205$ & $0.5181\pm 0.0205$ \\
& 0.75 & $0.5387\pm 0.0251$ & $0.5181\pm 0.0205$ & $0.5181\pm 0.0205$ & $0.5181\pm 0.0205$ & $0.5181\pm 0.0205$ & $0.5181\pm 0.0205$ & $0.5181\pm 0.0205$ \\
& 0.7  & $0.5387\pm 0.0251$ & $0.5181\pm 0.0205$ & $0.5181\pm 0.0205$ & $0.5181\pm 0.0205$ & $0.5181\pm 0.0205$ & $0.5181\pm 0.0205$ & $0.5181\pm 0.0205$ \\
\bottomrule
\end{tabular}}
\caption{For Supply-Chain/37, efficiency reduction over similarity threshold for replacement in the range $\{0.7,...,1\}$ and averaging over the top $5$ nodes selected by UniqueRank and the brute force baseline method, across thresholds in the range $\{0.7,...,1\}$.}
\label{tab:effsc37}
\end{table*}

\begin{table*}
\resizebox{\textwidth}{!}{%
\begin{tabular}{cc|c|cccccc}
\toprule
\multirow[t]{2}{*}{} & & & \multicolumn{6}{|c}{Threshold for baseline} \\
\cline{3-9}
& & \makecell{UniqueRank \\ (Threshold- \\ independent)} & 0.95 & 0.9 & 0.85 & 0.8 & 0.75 & 0.7 \\
\midrule
\multirow{6}{*}{\makecell{Threshold for \\ efficiency \\ reduction}}
& 0.95 & $0.0177\pm 0.0017$ & $0.0217\pm 0.0031$ & $0.0217\pm 0.0031$ & $0.0198\pm 0.0035$ & $0.0198\pm 0.0035$ & $0.0189\pm 0.0037$ & $0.0189\pm 0.0037$ \\
& 0.9  & $0.0177\pm 0.0017$ & $0.0217\pm 0.0031$ & $0.0217\pm 0.0031$ & $0.0198\pm 0.0035$ & $0.0198\pm 0.0035$ & $0.0189\pm 0.0037$ & $0.0189\pm 0.0037$ \\
& 0.85 & $0.0177\pm 0.0017$ & $0.0218\pm 0.0033$ & $0.0218\pm 0.0033$ & $0.0218\pm 0.0033$ & $0.0218\pm 0.0033$ & $0.0209\pm 0.0036$ & $0.0209\pm 0.0036$ \\
& 0.8  & $0.0177\pm 0.0017$ & $0.0218\pm 0.0033$ & $0.0218\pm 0.0033$ & $0.0218\pm 0.0033$ & $0.0218\pm 0.0033$ & $0.0209\pm 0.0036$ & $0.0209\pm 0.0036$ \\
& 0.75 & $0.0177\pm 0.0017$ & $0.0214\pm 0.0034$ & $0.0214\pm 0.0034$ & $0.0214\pm 0.0034$ & $0.0214\pm 0.0034$ & $0.0214\pm 0.0034$ & $0.0214\pm 0.0034$ \\
& 0.7  & $0.0177\pm 0.0017$ & $0.0214\pm 0.0034$ & $0.0214\pm 0.0034$ & $0.0214\pm 0.0034$ & $0.0214\pm 0.0034$ & $0.0214\pm 0.0034$ & $0.0214\pm 0.0034$ \\
\bottomrule
\end{tabular}}
\caption{For Facebook-3437, efficiency reduction over similarity threshold for replacement in the range $\{0.7,...,1\}$ and averaging over the top $5$ nodes selected by UniqueRank and the brute force baseline method, across thresholds in the range $\{0.7,...,1\}$.}
\label{tab:efffb3437}
\end{table*}
% use appendices with more than one appendix
% then use \section to start each appendix
% you must declare a \section before using any
% \subsection or using \label (\appendices by itself
% starts a section numbered zero.)
%

%\appendices

% use section* for acknowledgment
\ifCLASSOPTIONcompsoc
  % The Computer Society usually uses the plural form
  \section*{Acknowledgments}
\else
  % regular IEEE prefers the singular form
 
  \section*{Acknowledgment}
 \fi

The authors would like to thank Danielle Sullivan, Sheila Alemany Blanco, Giselle Zeno, Michael Coury, and Evan Young from MIT Lincoln Laboratory for their feedback.

% Can use something like this to put references on a page
% by themselves when using endfloat and the captionsoff option.
\ifCLASSOPTIONcaptionsoff
  \newpage
\fi

% trigger a \newpage just before the given reference
% number - used to balance the columns on the last page
% adjust value as needed - may need to be readjusted if
% the document is modified later
%\IEEEtriggeratref{8}
% The "triggered" command can be changed if desired:
%\IEEEtriggercmd{\enlargethispage{-5in}}

% references section

% can use a bibliography generated by BibTeX as a .bbl file
% BibTeX documentation can be easily obtained at:
% http://mirror.ctan.org/biblio/bibtex/contrib/doc/
% The IEEEtran BibTeX style support page is at:
% http://www.michaelshell.org/tex/ieeetran/bibtex/
\bibliographystyle{IEEEtran}
% argument is your BibTeX string definitions and bibliography database(s)

%\bibliography{IEEEabrv,../bib/paper,references}
\bibliography{references}
\end{document}